\newcommand{\LParen}{ \bm{(} }
\newcommand{\RParen}{ \bm{)} }
\newcommand{\compU}{ \mathcal{C}_{\mathrm{u}} }
\newcommand{\compState}{ \mathcal{C}_{\mathrm{state}} }
\newcommand{\rr}{\mathbb{R}}
\newcommand{\id}{\mathbb{1}}
\pgfplotsset{compat=1.11}
\newtheorem{theorem}{Theorem}
\newtheorem{proposition}{Proposition}
\newtheorem{definition}{Definition}
\newtheorem{lemma}{Lemma}
\newcommand{\ket}[1]{\vert{#1}\rangle}
\definecolor{comment}{rgb}{0,.4,1}
\definecolor{jens}{rgb}{0,.8,.5}
\definecolor{roberto}{rgb}{0.4,.0,0.6}
\definecolor{nicole}{rgb}{.7,.1,0}
\definecolor{dominik}{rgb}{0.4,.0,0.6}
\definecolor{dg}{rgb}{0,0,0}
\newcommand{\new}[1]{{\color{dg}#1}}
\definecolor{naga}{rgb}{0.5,.3,0.7}
\newcommand{\nt}[1]{{\color{naga}}}
\begin{document}
\title{Linear growth of quantum circuit complexity}
\author{Jonas~Haferkamp}
\affiliation{Dahlem Center for Complex Quantum Systems, Freie Universit{\"a}t Berlin, 14195 Berlin, Germany}
\affiliation{Helmholtz-Zentrum Berlin f{\"u}r Materialien und Energie, 14109 Berlin, Germany}
\author{Philippe~Faist}
\affiliation{Dahlem Center for Complex Quantum Systems, Freie Universit{\"a}t Berlin, 14195 Berlin, Germany}
\author{\ Naga\ B.~T.~Kothakonda}
\affiliation{Dahlem Center for Complex Quantum Systems, Freie Universit{\"a}t Berlin, 14195 Berlin, Germany}
\affiliation{Institute for Theoretical Physics, University of Cologne, 50937 Cologne, Germany}
\author{Jens~Eisert}
\affiliation{Dahlem Center for Complex Quantum Systems, Freie Universit{\"a}t Berlin, 14195 Berlin, Germany}
\affiliation{Helmholtz-Zentrum Berlin f{\"u}r Materialien und Energie, 14109 Berlin, Germany}
\author{Nicole~Yunger~Halpern}
\affiliation{ITAMP, Harvard-Smithsonian Center for Astrophysics, Cambridge, MA 02138, USA}
\affiliation{Department of Physics, Harvard University, Cambridge, MA 02138, USA}
\affiliation{Research Laboratory of Electronics, Massachusetts Institute of Technology, Cambridge, Massachusetts 02139, USA}
\affiliation{Center for Theoretical Physics, Massachusetts Institute of Technology, Cambridge, Massachusetts 02139, USA}
\affiliation{Institute for Physical Science and Technology, University of Maryland, College Park, MD 20742, USA}

\maketitle

{\bf 
	Quantifying quantum states' complexity is a key problem in various subfields of science, from quantum computing to black-hole physics. We prove a prominent conjecture by Brown and Susskind about how random quantum circuits' complexity increases. Consider constructing a unitary from Haar-random two-qubit quantum gates. Implementing the unitary exactly requires a circuit of some minimal number of gates---the unitary's exact circuit complexity. We prove that this complexity grows linearly with the number of random gates, with unit probability, until saturating after exponentially many random gates. Our proof is surprisingly short, given the established difficulty of lower-bounding the exact circuit complexity. Our strategy combines differential topology and elementary algebraic geometry with an inductive construction of Clifford circuits.}\smallskip
    
\new{Complexity} is a pervasive concept at the intersection of \new{computer science,} quantum computing, quantum many-body systems, and black-hole physics.
In general, complexity quantifies the resources required to implement a computation.
For example, a Boolean function's complexity can be defined as the minimal number of gates, chosen from a given gate set, necessary to evaluate the function.
In quantum computing, the circuit model provides a natural measure of complexity \new{for pure states and unitaries}:
A unitary transformation's \new{quantum circuit} complexity is the size, \new{measured with the number of gates,}
of the smallest circuit that effects the unitary.
Similarly, a pure state's \new{quantum circuit}  
complexity definable is the size of the smallest circuit that produces the
state from a product state.

\new{Quantum circuit complexity, by quantifying the minimal
size of any circuit that implements a given unitary, is closely related to computational notions of complexity. 
The latter quantify the difficulty of
solving a given computational task with a quantum computer and
determine quantum complexity classes.
Yet quantum circuit complexity can
subtly differ from computational notions of quantum complexity:
The computational notion depends on
the difficulty of finding the circuit.
In the following, we refer to quantum circuit complexity
as ``quantum complexity'' for convenience.}

Quantum complexity has risen to prominence recently, due to connections between gate complexity and holography in high-energy physics, in the context of the anti-de-Sitter-space/conformal-field-theory (AdS/CFT) correspondence \cite{susskind2016computational,stanford2014complexity,brown2016complexity,PhysRevD.97.086015,bouland2019computational}.
%
\new{In the bulk theory, a wormhole's volume grows steadily for exponentially long times. In contrast, in boundary quantum theories, local observables tend to thermalize much more quickly. 
This contrast is known as the \emph{wormhole-growth paradox}~\cite{susskind2016computational}. It appears to contradict the AdS/CFT correspondence, which postulates
a mapping of physical operators
between the bulk theory and a quantum boundary theory. 
A resolution has been proposed in the ``complexity equals volume'' conjecture:
The wormhole's volume is conjectured to be dual not to a local quantum observable, but to the boundary state's quantum complexity~\cite{stanford2014complexity}.}
Similarly, the ``complexity equals action'' conjecture
posits that a holographic state's complexity is dual to a certain  space-time region's action \cite{PhysRevLett.116.191301}.

\new{
A counting argument reveals that the vast majority of unitaries
have near-maximal complexities~\cite{PhysRevLett.106.170501,knill1995approximation}. Yet
lower-bounding the quantum complexity is a long-standing open problem
in quantum information theory. The core difficulty is that the gates performed early in a circuit may partially cancel with gates
performed later. One can rarely rule out the existence of a ``shortcut'',
a seemingly unrelated but smaller circuit that 
generates the same unitary.
Consequently, quantum-gate--synthesis algorithms, which decompose a given
unitary into gates, run for times exponential in the system
size \cite{TComment}.
Approaches to lower-bounding unitaries' quantum
complexities include Nielsen's geometric
picture~\cite{nielsen2005geometric,nielsen2006quantum,Nielsen_06_Optimal,dowling2008geometry,Entanglement}.
}

A key question in the study of quantum complexity is the following.
Consider constructing deeper and deeper circuits for an $n$-qubit system, by applying random two-qubit gates. At what rate does the circuit complexity increase?
Brown and Susskind conjectured that quantum circuits' complexity generically grows linearly for an exponentially long time~\cite{PhysRevD.97.086015,susskind2018black}. 
\new{Intuitively, the conjecture is that most circuits are
fundamentally ``incompressible'': 
No substantially shorter quantum circuit effects the same unitary.
}
\new{Quantum complexity, if it grows linearly with a generic circuit's depth, strongly supports the ``complexity equals volume'' conjecture as a proposal to the wormhole-growth paradox~\cite{susskind2016computational,stanford2014complexity}.}
\new{The conjecture therefore implies that complexity growth is as generic as thermalization \cite{1408.5148, ngupta_Silva_Vengalattore_2011} and operator growth \cite{SwingleScrambling, Shenker} (the spreading of an initially local operator's support in the Heisenberg picture). 
However, in contrast to easily measurable physical quantities, which
thermalize rapidly, complexity grows for an exponentially long time.}
Brown and Susskind have supported their conjecture using Nielsen's
\new{geometric approach (Figure~\ref{figure:complexity-growth}\textbf{b})}~\cite{nielsen2005geometric,nielsen2006quantum,Nielsen_06_Optimal,dowling2008geometry}.
Further evidence for the conjecture \new{has} arisen from counting arguments~\cite{roberts2017chaos}.
\new{Brandao \emph{et al.}~\cite{brandao2019models} recently proved a key result about quantum complexity's growth under random circuits.
The authors leveraged the mathematical toolbox of \emph{$t$-designs}, finite collections of unitaries that approximate completely random unitaries~\footnote{A $t$-design is a probability distribution, over unitaries, whose first $t$ moments equal the Haar measure's moments~\cite{gross_evenly_2007,dankert_exact_2009,brandao_local_2016}. \new{The Haar measure is the unique unitarily invariant probability measure over a compact group.}}.
Ref.~\cite{brandao2019models} proved that quantum complexity robustly grows polynomially in a random circuit's size.
The complexity's growth was shown to be linear in
the circuit's size if the local Hilbert-space dimension is large.
}

We prove that \new{a random circuit's complexity}
grows linearly with time (with the number of gates applied).
We consider unitaries
constructed from \new{quantum circuits composed of} Haar-random two-qubit gates. 
\new{The focus of our proof is the set of unitaries that can be generated with a fixed arrangement of gates. We show that this set's dimension, which we call \emph{accessible dimension}, serves as a good proxy for the quantum complexity of almost every unitary in the set.}
Our bound on the complexity holds for all random circuits described above, with probability $1$.
Instead of invoking unitary designs~\cite{brandao2019models} or Nielsen's geometric approach~\cite{nielsen2005geometric,nielsen2006quantum,Nielsen_06_Optimal,dowling2008geometry},
we employ elementary aspects of differential topology and algebraic geometry, combined with an inductive construction of Clifford circuits.
Clifford circuits play a pivotal role in quantum computing, as circuits that can easily be implemented fault-tolerantly~\cite{Gottesman_99_Heisenberg,Gottesman_99_Fault}.

\begin{figure}
 	\centering
 	\includegraphics{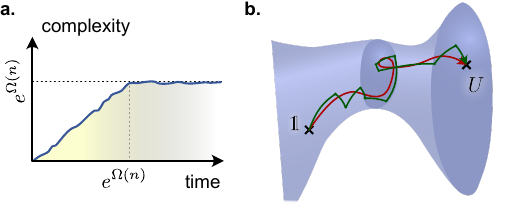}
 	\caption{\new{\textbf{a.}~The complexity has been conjectured to grow linearly under random quantum circuits until times exponential in the number $n$ of qubits~\cite{PhysRevD.97.086015}.
 	\textbf{b.}~The blue region depicts part of the space of $n$-qubit
 	unitaries. A unitary $U$ has a complexity that we define as
 	the minimal number of two-qubit gates necessary to effect $U$ (green jagged path; each path segment represents a gate).
 	Nielsen's complexity~\cite{nielsen2005geometric,nielsen2006quantum,Nielsen_06_Optimal,dowling2008geometry},
 	involved in Ref.~\cite{PhysRevD.97.086015},
 	attributes a high metric cost to directions 
 	associated with nonlocal operators.
 	In this geometry, the unitary's complexity is the shortest path that connects $\mathbb{1}$ to $U$ (red line).  
 	Nielsen's geometry suggests the toolbox of differential
 	geometry, avoiding circuits' discreteness.
 	The circuit complexity upper-bounds Nielsen's
 	complexity; opposite bounds hold for approximate circuit complexity~\cite{dowling2008geometry}.}}
 	\label{figure:complexity-growth}
 \end{figure}
 
This work is organized as follows. 
First, we introduce the setup and definitions.
Second, we present the main result,
the complexity's exponentially long linear growth.
We present a high-level overview of the proof third.
The key mathematical steps follow,
in the methods section.
Two corollaries follow: an extension to random arrangements of gates and an extension to slightly imperfect gates.
In the discussion, we compare our results with known results
and explain our work's implications for various subfields of quantum physics.
Finally, we discuss the opportunities engendered by this work.
In Appendix~A of \new{Ref.}~\cite{suppmaterial}, we review elementary algebraic geometry required for the proof. 
Proof details appear in Appendix~B.
We elaborate on states' complexities in Appendix~
C.
We prove two corollaries in Appendices~D and~E.
Finally, we compare notions of circuit complexity in Appendix~F.

\emph{Preliminaries.} This work concerns a system of $n$ qubits. 
For convenience, we assume that $n$ is even. 
We simplify tensor-product notation as 
$\ket{0^k} :=  \ket{0}^{\otimes k}$, for $k = 1, 2, \ldots, n$;
and $\mathbb{1}_k$ denotes the $k$-qubit identity operator.
Let $U_{j,k}$ denote a unitary gate that operates on qubits $j$ and $k$. 
Such gates need not couple the qubits together and need not be geometrically local.
An \emph{architecture} is an arrangement of
some fixed number $R$ of gates [Figure~\ref{figure:brickwork}(a)].

\begin{definition}[Architecture]\label{def_architecture}
An architecture is a directed acyclic graph that contains $R \in \mathbb{Z}_{>0}$ vertices (gates). 
Two edges (qubits) enter each vertex, and two edges exit.
\end{definition}
Figures~\ref{figure:brickwork}(b) and~\ref{figure:brickwork}(c)
illustrate example architectures governed by our results.

\begin{figure*}
  \includegraphics{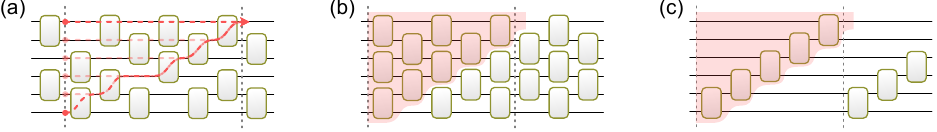}
  \caption{
    Our result relies on architectures and their backwards light cones.
    \textbf{(a)}~An \emph{architecture} specifies how $R$ 2-qubit gates are arranged in an $n$-qubit circuit.
    The gates need not be applied to neighboring qubits, though they are depicted this way for convenience.
    \new{Our result involves blocks with the following property:}
    The block contains a qubit reachable from each other qubit via a path (red dashed line), possibly unique to the latter qubit, that
    passes only through gates in the \new{block}.
    \textbf{(b)}~The \emph{brickwork architecture} interlaces layers of gates on a one-dimensional (1D) chain. 
    In a 1D architecture with geometrically local gates, such as the brickwork architecture, each \new{block} has a \new{backwards} light cone (light-red region) that touches the qubit chain's edges. In the brickwork architecture, a minimal \new{backwards-light-cone--containing block} consists of
    ${\sim} n^2$ gates.
    \textbf{(c)}~The staircase architecture, too, acts on a 1D qubit chain. The circuit consists of layers in which $n-1$ gates
    act on consecutive qubit pairs.  A minimal \new{backwards-light-cone--containing block consists of} $n-1$ gates.
}
  \label{figure:brickwork}
\end{figure*}

	\begin{itemize}
\item	A \emph{brickwork} is the architecture of any circuit formed as follows:
	Apply a string of two-qubit gates: 
	$U_{1,2} \otimes U_{3,4}\otimes \ldots \otimes U_{n-1,n}$.
	Then, apply a staggered string of gates, 
	as shown in Fig.~\ref{figure:brickwork}(b).
	Perform this pair of steps $T$ times total, using possibly different gates each time.

	\item 
	A \emph{staircase} is the architecture of any circuit formed as in Fig.~\ref{figure:brickwork}(c):
	Apply a stepwise string of two-qubit gates: 
	$U_{n,n-1}U_{n-2,n-1} \ldots U_{2,1}$. 
	Repeat this process $T$ times, using possibly different gates each time. 
	\end{itemize}
\noindent
The total number of gates in the brickwork architecture, as in the staircase architecture, is $R=(n-1)T$.
Our results extend to more-general architectures,
e.g., the architecture depicted in Fig.~\ref{figure:brickwork}(a)
and architectures of non-nearest-neighbor gates.
Circuits of a given architecture can be formed randomly.

\begin{definition}[Random quantum circuit] \label{def_randomquantumcircuits}
   Let $A$ denote an arbitrary architecture. 
   A probability distribution can be induced over
   the architecture-$A$ circuits as follows:
   For each vertex in $A$, draw a gate Haar-randomly from $\mathrm{SU}(4)$. Then, contract the unitaries along 
   the edges of $A$.
   Each circuit so constructed is called a \emph{random quantum circuit}.
\end{definition}
\noindent
Implementing a unitary with the optimal gates,
in the optimal architecture, concretizes the notion of complexity.
\begin{definition}
[Exact circuit complexities]
\label{def_Circuit_Comp}
Let $U\in\mathrm{SU}(2^n)$ denote an $n$-qubit unitary.
The (exact) circuit complexity $\compU(U)$ is the least number of 
two-qubit gates in any circuit that implements $U$.
Similarly, let $\ket{\psi}$ denote a pure quantum state vector.
The (exact) state complexity $\compState(|\psi\rangle)$ is the least number $r$ of two-qubit gates 
$U_1, U_2, \dots ,U_r$, arranged in any architecture, such that 
$U_1 U_2  \dots U_r|0^n\rangle=|\psi\rangle$.
\end{definition}

\new{We now define a \emph{backwards light cone}, 
a concept that helps us focus on sufficiently connected circuits.}
\new{Consider creating two vertical cuts in a circuit 
(dashed lines in Fig.~\ref{figure:brickwork}).
The gates between the cuts form a \emph{block}.
We say that a block contains a \emph{backwards light cone} if 
some qubit $t$ links to each other qubit $t'$
via a directed path of gates (a path that may be unique to $t'$).
The backwards light cone consists of the gates in the paths.}

\emph{Main result: Linear growth of complexity in random quantum circuits.} Our main result is a lower bound on the complexities of random unitaries and states. The bound holds with unit probability. 

\begin{restatable}[Linear growth of complexity]{theorem}{lineargrowth} \label{theorem:linear growth}
Let $U$ denote a unitary implemented by
a random quantum circuit in an architecture formed \new{by concatenating $T$ 
blocks of $\new{\leq} L$ gates each,
each block containing a backwards light cone.}
The unitary's circuit complexity is lower-bounded as
 \begin{equation}
    \label{eq_Linr_Growth}
    \compU(U)\geq \frac{R}{9L} -\frac{n}{3}\ ,
 \end{equation}
 with unit probability, until the number of gates grows to
 $T \geq 4^n-1$. 
 The same bound holds for $\compState(U|0^n\rangle)$, until 
 $T \geq 2^{n+1}-1$.
\end{restatable}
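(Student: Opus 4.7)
\emph{Proof plan.} The strategy combines a semi-algebraic dimension count with Sard's theorem and an explicit Clifford realization of the circuit whose Jacobian has controllable rank.

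First, we would reduce the theorem to a dimension inequality on the image of the circuit map $f_A\colon\mathrm{SU}(4)^R\to\mathrm{SU}(2^n)$ associated with the given architecture. For any architecture $A'$ on $n$ qubits with $m$ gates, the analogous contraction map $f_{A'}$ is polynomial in the gate entries, so its image $\mathcal{U}_{A'}$ is a semi-algebraic subset of $\mathrm{SU}(2^n)$ with $\dim\mathcal{U}_{A'}\leq 15m$. Because only finitely many architectures on $n$ qubits use at most $m$ two-qubit gates, the sublevel set $\{U:\compU(U)\leq m\}$ is a finite union of such images, and thus it too has dimension at most $15m$. Whenever $\dim\mathcal{U}_A > 15m$, the intersection $\mathcal{U}_A\cap\{U:\compU(U)\leq m\}$ is therefore a proper lower-dimensional semi-algebraic subset of $\mathcal{U}_A$. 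Applying Sard's theorem to the smooth map $f_A$ shows that the push-forward of the Haar measure on $\mathrm{SU}(4)^R$ is absolutely continuous with respect to the Hausdorff measure on the smooth locus of $\mathcal{U}_A$, so any lower-dimensional subset receives zero mass. This reduces the theorem to proving the bound $\dim\mathcal{U}_A \geq 15\bigl(\tfrac{R}{9L}-\tfrac{n}{3}\bigr)$.

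Second, we would lower-bound $\dim\mathcal{U}_A$ by exhibiting one point in parameter space at which $\mathrm{rank}(df_A)$ is already this large; the point we pick is an all-Clifford realization of the architecture. The algebraic advantage of a Clifford point is that a tangent perturbation $X\in\mathfrak{su}(4)$ at any gate, expanded in the two-qubit Pauli basis, propagates through the remaining Clifford circuit into a concrete linear combination of $n$-qubit Pauli strings, since Clifford conjugation sends Paulis to Paulis up to sign. The image of $df_A$ at such a point is therefore a span of explicit Pauli operators, and its dimension can be read off by tracking those Paulis in the symplectic representation of the Clifford group.

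Third, we would build the required Clifford circuit inductively, one causal slice at a time. The causal-slice condition is crucial: within a single slice a distinguished qubit is reachable from every other qubit via a directed path of gates in the slice, so by placing CNOT-based gadgets inside each slice a perturbation at any of its gates can be made to propagate out to a Pauli supported on every qubit. Bookkeeping the Paulis contributed by successive slices, one shows that each new slice adds a fresh set of linearly independent Paulis to the image of $df_A$, until the saturation dimension $4^n-1$ of $\mathrm{SU}(2^n)$ is reached. This yields a bound of the form $\dim\mathcal{U}_A\geq cT-c'n$ with constants determined by the gadget; calibrating the gadget produces the factors $\tfrac{1}{9}$ and $\tfrac{1}{3}$ in the theorem. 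The state-complexity statement follows by the same argument with the target replaced by the orbit $\mathrm{SU}(2^n)\ket{0^n}$ of real dimension $2^{n+1}-2$, which accounts for the different saturation threshold $2^{n+1}-1$.

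The main obstacle will be the inductive rank increment, i.e., showing that the Paulis introduced by each new causal slice remain linearly independent of those contributed by earlier slices even after the subsequent Cliffords scramble everything. The CNOT staircase inside each causal slice, together with a careful inductive bookkeeping of Pauli supports that uses the causal-slice backward light cone, is precisely what is needed to enforce this independence, and it also fixes the explicit constants appearing in the announced bound.
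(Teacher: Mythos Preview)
Your overall architecture---semialgebraic dimension comparison plus a Clifford point witnessing high rank of the contraction map---matches the paper's. However, there is a genuine gap in where you locate the constants $\tfrac{1}{9}$ and $\tfrac{1}{3}$.

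You use the crude upper bound $\dim\mathcal{U}_{A'}\leq 15m$ for an $m$-gate architecture and then say the theorem reduces to proving $\dim\mathcal{U}_A\geq 15\bigl(\tfrac{R}{9L}-\tfrac{n}{3}\bigr)$. But the Clifford construction does not deliver this. What the inductive argument over causal slices actually gives is that each slice contributes exactly \emph{one} new linearly independent Pauli direction, so the rank lower bound is merely $\dim\mathcal{U}_A\geq T=R/L$. (The slice is engineered to \emph{contract} a chosen $n$-qubit Pauli $P$ down to a single-site $Z$ on the terminal qubit of the backward light cone---not to spread a perturbation out to full support as you describe---and that single-site $Z$ is then realized as a tangent direction at the slice's last gate.) The quantity $T$ is much smaller than $15\bigl(\tfrac{R}{9L}-\tfrac{n}{3}\bigr)=\tfrac{5R}{3L}-5n$ once $T$ exceeds $\tfrac{15n}{2}$, so your reduction fails precisely in the regime of interest, and no ``calibration of the gadget'' can repair it: one slice, one new Pauli.

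In the paper the constants come from the \emph{upper} bound on the short-circuit side, not from the Clifford lower bound. By exploiting the $\mathrm{SU}(2)$ gauge redundancy at every internal edge of the architecture graph (a single-qubit unitary on a wire shared by two gates can be absorbed into either gate), one sharpens the naive $15m$ to $\dim\mathcal{U}_{A'}\leq 9m+3n$. Comparing this with $d_{A_T}\geq T$ gives $9R'+3n<T$, i.e.\ $R'<\tfrac{R}{9L}-\tfrac{n}{3}$, which is the announced bound. You need to insert this gauge-counting step and drop the expectation that the Clifford side produces the $9$.

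A secondary point: your use of Sard's theorem for the measure-zero step is imprecise---Sard controls the critical values, not absolute continuity of the push-forward. The paper instead works on the domain: the low-rank locus of $f_A$ is a proper algebraic subvariety of $\mathrm{SU}(4)^{\times R}$ (hence measure zero), and on its complement the constant-rank theorem makes $f_A$ locally a projection, so the \emph{preimage} of any lower-dimensional semialgebraic set sits in submanifolds of positive codimension and has Haar measure zero; compactness then globalizes.
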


The theorem governs all architectures \new{that contain enough backwards light cones}.
The brickwork architecture forms a familiar special case.
Let us choose for a brickwork's \new{blocks} to contain 
$2n$ of the columns in Fig.~\ref{figure:brickwork}(b).
Each \new{block} contains $L = n(n-1)$ gates
(in the absence of periodic boundary conditions), 
yielding the lower bound
$\compU(U)\geq \frac{R}{9n(n-1)}-\frac{n}{3} \, .$
Another familiar example is the staircase architecture.
A staircase's \new{blocks} can have the least $L$ possible, $n-1$, 
which yields the strongest bound. 

\emph{High-level overview of the proof of Theorem~\ref{theorem:linear growth}.}  
\begin{figure}
	\centering
	\includegraphics{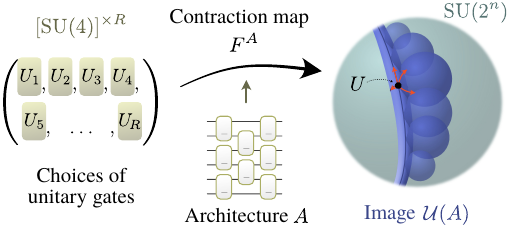}
	\caption{The $R$-gate architecture $A$ is associated with a contraction map $F^A$. $F^A$ maps a list of input gates (a point in $[\mathrm{SU}(4)]^{\times R}$) to an $n$-qubit unitary $U$ in $\mathrm{SU}(2^n)$. The unitary results from substituting the gates into the architecture. $F^A$ has an image $\mathcal{U}(A)$, which consists of the unitaries implementable with the architecture.
		$A$ has an \emph{accessible dimension}, $d_A$, equal to the dimension of $\mathcal{U}(A)$. Our core technical result is that $d_A$ grows linearly with $R$. To bridge this result to complexity, consider an arbitrary architecture $A'$ formed from fewer gates than a constant fraction of $R$. Such an architecture's accessible dimension satisfies $d_{A'} < d_A$, as we show. Therefore, every unitary in $\mathcal{U}(A)$ has a complexity linear in $R$, except for a measure-0 set.
		The proof relies on algebraic geometry. A key concept is the rank of $F^A$ at a point. The rank counts the local degrees of freedom in the image (orange arrows).}
	\label{fig_Contraction_map}
\end{figure}
Consider fixing an $R$-gate architecture $A$,
then choosing the gates in the architecture.
The resulting circuit implements some $n$-qubit unitary.
All the unitaries implementable with $A$ form a set $\mathcal{U}(A)$.
Our proof relies on properties of 
$\mathcal{U}(A)$---namely, on
the number of degrees of freedom in $\mathcal{U}(A)$.
We define this number as the architecture's \emph{accessible dimension}, $d_{A} = \dim \LParen \mathcal{U}(A) \RParen$
(Fig.~\ref{fig_Contraction_map}).
The following section contains a formal definition;
here, we provide intuition.
As the $n$-qubit unitaries form a space of dimension $4^n$, 
$d_A \in [0, 4^n]$.
The greater the $d_A$, the more space $\mathcal{U}(A)$ fills in the set of $n$-qubit unitaries.
\new{%
Considering $\mathcal{U}(A)$ circumvents the intractability of calculating a unitary's circuit complexity.
To better understand the form of $\mathcal{U}(A)$, 
we study the set's dimension, which is the accessible dimension.  Importantly, the accessible dimension enables us to compare the sets $\mathcal{U}(A)$ generated by different architectures.
Distinct accessible dimensions imply that
the lower-dimensional set has measure zero in the higher-dimensional set.
As a proxy for quantum complexity,
the accessible dimension plays a role similar to $t$-designs
in Refs.~\cite{brandao2019models,brandao_efficient_2016}}.
Our first technical result lower-bounds sufficiently connected architecture's accessible dimension:
\begin{proposition}[Lower bound on accessible dimension]
  \label{thm:lower-bound-on-dA}
  Let $A_T$ denote
  \new{an architecture formed by concatenating $T$
  blocks of $\new{\leq} L$ gates each,
  each block containing a backwards light cone.}
  The architecture's accessible dimension is lower-bounded as
  \begin{align}
    d_{A_T} \geq T \geq \frac{R}{L} \ .
    \label{eq:dA-lower-bound}
  \end{align}
\end{proposition}

We can upper-bound $d_{A}$, for an arbitrary architecture $A$, by counting parameters.
To synopsize the argument in Appendix~B:
Fifteen real parameters specify each 2-qubit unitary. 
Each qubit shared by two unitaries makes 3 parameters redundant. Hence
\begin{align}
  d_A \leq 9R + 3n\ .
  \label{eq:dA-upper-bound}
\end{align}
The accessible dimension reaches its maximal value, $4^n$, after a number of gates exponential in $n$. 
Similarly, the circuit complexity reaches its maximal value after exponentially many gates.
This parallel suggests $d_A$ as a proxy for the circuit complexity.
The next section rigorously justifies the use of $d_A$ as a proxy.

The proof of Theorem~\ref{theorem:linear growth} revolves around the accessible dimension $d_{A_T}$ of a certain $R$-gate architecture $A_T$.  
The main idea is as follows. 
Let $R'$ be less than a linear fraction of $R$. 
More specifically, let $9R'+3n < T = R/L$.
For every $R'$-gate architecture $A'$, 
$d_{A'} < d_{A_T}$ holds by a combination of~\eqref{eq:dA-lower-bound}
and~\eqref{eq:dA-upper-bound}.
Consequently, Appendix~B in \new{Ref.}~\cite{suppmaterial} shows, $\mathcal{U}(A')$ has zero probability in $\mathcal{U}(A_T)$, according to the measure in Definition~\ref{def_randomquantumcircuits}.
Therefore, almost every unitary $U\in \mathcal{U}(A_T)$ has a complexity greater than the greatest possible $R'$.
Inequality~\eqref{eq_Linr_Growth} follows.

\emph{Discussion.}
We have proven a prominent physics conjecture proposed by Brown and Susskind for random quantum circuits~\cite{PhysRevD.97.086015,susskind2018black}:
A local random circuit's quantum complexity grows linearly in the number of gates until reaching a value exponential in the system size.
To prove this conjecture, we introduced a novel
technique for bounding complexity.
The proof rests on our connecting the quantum complexity to the accessible dimension, the dimension of the set of unitaries implementable with a given architecture (arrangement of gates).
Our core technical contribution is
a lower bound on the accessible dimension.
The bound rests on techniques from differential topology
and algebraic geometry.

To the best of our knowledge, Theorem~\ref{theorem:linear growth} is the first rigorous demonstration of the linear growth of random qubit circuits' complexities for exponentially long times. The bound holds until the complexity reaches $\compU(U)=\Omega(4^n)$---the scaling, up to polynomial factors, of the greatest complexity achievable by any $n$-qubit unitary~\cite{NielsenChuang}.
A hurdle has stymied attempts to prove that local random circuits' quantum complexity grows linearly:
Most physical properties (described with, e.g., local observables or correlation functions) reach fixed values in times subexponential in the system size.
One must progress beyond such properties
to prove that the complexity grows linearly at superpolynomial times.
We overcome this hurdle by identifying the accessible dimension as a proxy for the complexity.

Theorem~\ref{theorem:linear growth} complements another rigorous insight about complexity growth. 
In Ref.~\cite{brandao2019models},
the linear growth of complexity is proven in the limit of large local dimension $q$ and for a strong notion of quantum circuit complexity, with help from Ref.~\cite{hunter2019unitary}.
Furthermore, depth-$T$ random qubit circuits have complexities that scale as $\Omega(T^{1/11})$
until $T=\exp\LParen \Omega(n) \RParen$~\cite{brandao2019models,brandao_local_2016}.
The complexity scales the same way for other types of random unitary evolutions, such as a continuous-time evolution under a stochastically fluctuating Hamiltonian~\cite{RandomHamiltonians}. 
Finally, Ref.~\cite{brandao2019models} addresses bounds on convergence to unitary designs~\cite{brandao_local_2016, hunter2019unitary,nakata2017efficient,haferkamp2020quantum,RandomHamiltonians,haferkamp2020improved}, translating these bounds into results about circuit complexity.
Theorem~\ref{theorem:linear growth} is neither stronger nor weaker than the results of Ref.~\cite{brandao2019models}, which govern a more operational notion of complexity---how easily $U |0^n\rangle\langle 0^n|U^\dagger$ can be distinguished from the maximally mixed state.


\new{Our work is particularly relevant} to the holographic context surrounding the Brown-Susskind conjecture.
There, random quantum circuits are conjectured to serve as proxies for chaotic quantum dynamics
generated by local time-independent Hamiltonians~\cite{PhysRevX.8.021014}.
Ref.~\cite{HaydenBlackHoles} has introduced this conjecture into black-hole physics, and Ref.~\cite{susskind2016computational} discussed the conjecture in the context of holography.
A motivation for invoking random circuits is,
random circuits can be analyzed more easily
than time-independent--Hamiltonian dynamics.
Time-independent--Hamiltonian dynamics are believed to be mimicked also by time-fluctuating Hamiltonians~\cite{RandomHamiltonians} and 
by random ensembles of Hamiltonians.
Furthermore, complexity participates in analogies
with thermodynamics, such as a second law of
quantum complexity~\cite{PhysRevD.97.086015}.  Our techniques can
be leveraged to construct an associated resource theory of
complexity~\cite{resourcepaper}.

In the context of holography, \emph{thermofield double states}' complexities have attracted
recent interest~\cite{susskind2016computational,Eternal,BigComplexity,EntanglementNotEnough}.
Thermofield double states are pure bipartite quantum states for which each subsystem's reduced state is thermal.  In the context of holography, thermofield double states are dual to eternal black holes in anti-de-Sitter space~\cite{Eternal}. 
Such a black hole's geometry consists of two sides 
connected by a wormhole, or Einstein-Rosen bridge. 
The wormhole's volume grows for a time exponential in the number of degrees of freedom of the boundary theory~\cite{susskind2016computational,PhysRevD.97.086015}.
As discussed above, random quantum circuits are expected to capture the (presumed Hamiltonian) dynamics behind the horizon.
If they do, the growth of the wormhole's volume is conjectured to match the growth of the boundary state's complexity \cite{susskind2016computational,stanford2014complexity,PhysRevD.97.086015}; both are expected to reach a value exponentially large in the number of degrees of freedom. 
Our results govern the random circuit that serves as a proxy for the dynamics behind the horizon.
That random circuit's complexity, our results show strikingly,
indeed grows to exponentially large values. 
\new{This conclusion reinforces
the evidence that quantum circuit complexity is the right quantity
with which to resolve the wormhole-growth
paradox~\cite{susskind2016computational}.} 


\emph{Outlook.} Our main result governs exact circuit complexity.
In \new{Ref.}~\cite[Cor.~2]{suppmaterial}, we generalize the result to
\new{a slightly robust notion of} 
circuit complexity.
There, the complexity depends on our tolerance of the error in the implemented unitary. Yet, the error tolerance can be uncontrollably small.
The main challenge in extending our results to 
approximate complexity is,
the accessible dimension crudely characterizes the set of unitaries implementable with a given architecture. 
Consider attempting to enlarge this set to include
all the $n$-qubit unitaries that lie close to the set in some norm.  The enlarged set's dimension is $4^n$.
The reason is, the enlargement happens in all
directions of $\mathrm{SU}(2^n)$.
Therefore, our argument does not work as for the exact complexity.
Extending our results to approximations therefore offers an opportunity for 
future work. 
Approximations may also 
illuminate random circuits as instruments for identifying quantum advantages~\cite{neill_blueprint_2017,Supremacy}; 
they would show that a polynomial-size quantum circuit cannot be compressed substantially while achieving a good approximation.
These observations motivate an uplifting of the present work to robust notions of quantum circuit complexity (see, e.g., Ref.~\cite{brandao2019models}).
A possible uplifting might look as follows.
Let $A$ denote an $R$-gate architecture,
and let $A'$ denote an $R'$-gate architecture.
Suppose that the accessible dimensions obey $d_{A'} < d_A$.  
A unitary implemented with $A$ has no chance of 
occupying the set $\mathcal{U}(A')$, 
which has a smaller dimension than $\mathcal{U}(A)$.
Consider enlarging $\mathcal{U}(A')$ to include the unitaries that lie $\epsilon$-close, for some $\epsilon>0$. If $\mathcal{U}(A')$ is sufficiently
smooth and well-behaved,
we expect the enlarged set's volume, intersected with $\mathcal{U}(A)$, to scale as ${\sim}\,\epsilon^{d_A - d_{A'}}$.
Furthermore, suppose that unitaries implemented with $A$ are distributed sufficiently evenly in $\mathcal{U}(A)$ 
[rather than being concentrated close to $\mathcal{U}(A')$].
All the unitaries in $\mathcal{U}(A)$ except a small fraction
${\sim}\,\epsilon^{d_A-d_{A'}}$ could not lie in $\mathcal{U}(A')$.  
We expect, therefore, that all the unitaries in $\mathcal{U}(A)$ except a fraction ${\sim}\,\epsilon^{d_A-d_{A'}}$ have $\epsilon$-approximate complexities greater than $R'$.

\new{A related opportunity is a proof that Nielsen's geometric complexity measure grows linearly under random circuits.
Such a proof 
likely requires a more refined characterization of $\mathcal{U}(A)$ than its dimension.
The quantum complexity in Theorem~\ref{theorem:linear growth}
does not lower-bound Nielsen's complexity.
Hence our main results do not immediately imply a similar
bound for Nielsen's complexity.
However, proving the approximate circuit complexity's linear growth would suffice to lower-bound Nielsen's complexity, due to known inequalities between Nielsen's complexity and the circuit complexity [Fig.~\ref{figure:complexity-growth}(b)]
(e.g., Ref.~\cite{dowling2008geometry}).}

\new{We expect our machinery to be applicable to random processes
that more closely reflect a variety of systems that are studied in the
many-body physics community.
Examples include randomly fluctuating dynamics
\cite{RandomHamiltonians},
which implement random quantum circuits when Trotterized,
and thermofield-double states undergoing random ``shocks''~\cite{shenker2014black,shenker2014multiple,bouland2019computational}.
Additionally, \emph{hybrid circuits}---random unitary circuits punctuated by intermediate measurements---have recently attracted much interest~\cite{Li_19_Measurement,Skinner_19_Measurement,Chan_19_Unitary},
as the amount of entanglement present in such systems
appear to undergo phase transitions induced by the rate at which they
are measured.
A generalization of the accessible dimension to such systems might
reveal to what extent circuit complexity, as a measure of entanglement in deep dynamics, undergoes similar phase transitions. 
}
We hope that the present work, by innovating machinery for addressing complexity, stimulates further quantitative studies of holography, scrambling, and chaotic quantum dynamics.

\emph{Acknowledgements.}
We thank Aram Harrow and Richard K\"ung for discussions and thank P\'{e}ter Varj\'{u} for introducing us to the algebraic geometrical methods used in this paper. N.~Y.~H.~thanks Shira Chapman, Michael Walter, and the other organizers of the 2020 Lorentz Center workshop ``Complexity: From quantum information to black holes'' for inspiration. This work has been funded by the DFG (EI 519/14-1, CRC 183, for which this is an inter-node Berlin-Cologne project, and FOR 2724), \new{by the Einstein Research Foundation,
the FQXi,} and by an NSF grant for the Institute for Theoretical Atomic, Molecular, and Optical Physics at Harvard University and the Smithsonian Astrophysical Observatory. Administrative support was provided by the MIT CTP.

\emph{Author contributions.} J.~H.~developed the basic proof technique. All authors wrote the manuscript and established the results.

\emph{Data and code availability statement.} No data or code has been generated in this work.

\bibliographystyle{plain}

\newpage 

\onecolumngrid

\cleardoublepage
\setcounter{page}{0}
\setcounter{equation}{0}
\setcounter{footnote}{0}
\thispagestyle{empty}
\begin{center}
	\textbf{\large Methods}\\
\end{center}

Having overviewed the proof at a high level, we fill in the key mathematics.
Three points need clarifying.
First, we must rigorously define the accessible dimension, or the dimension of $\mathcal{U}(A)$, which is not a manifold.
Second, we must prove Proposition~\ref{thm:lower-bound-on-dA}. Finally, we must elucidate steps in the proof of Theorem~\ref{theorem:linear growth}.
We address these points using the toolbox of algebraic geometry.
We associate with every $R$-gate architecture $A$ a \emph{contraction map} 
$F^{A}: \mathrm{SU}(4)^{\times R} \to \mathrm{SU}(2^n)$.
This function maps a list of gates to an $n$-qubit unitary.
The unitary results from substituting the gates into the architecture $A$ (Fig.~\ref{fig_Contraction_map}).
The map contracts every edge (qubit) shared by two vertices (gates) in $A$.

The image of $F^A$ is the set $\mathcal{U}(A)$
of unitaries implementable with the architecture $A$.
$\mathcal{U}(A)$ is a \emph{semialgebraic set},
consisting of the solutions to a finite set of polynomial equations and inequalities over the real
numbers (see Appendix~A for a review).
That $\mathcal{U}(A)$ is a semialgebraic set follows from
the Tarski-Seidenberg principle, a deep result in semialgebraic geometry (Appendix~A).
A semialgebraic set's dimension quantifies the degrees of freedom needed to describe the set locally.
More precisely, a semialgebraic set decomposes into manifolds. 
The greatest dimension of any such manifold
equals the semialgebraic set's dimension.
The dimension of $\mathcal{U}(A)$ is 
the architecture $A$'s accessible dimension.
More restricted than a semialgebraic set is an algebraic set, which consists of the solutions to a finite set of polynomial equations.

Just as the contraction map's image will prove useful, so will the map's rank, defined as follows. Let 
$   x=(U_1, U_2, \ldots, U_R)
\in \mathrm{SU}(4)^{\times R}$
denote an input into $F^A$, such that the $U_j$ denote two-qubit gates.
The map's \emph{rank} at $x$ is the rank of a matrix that approximates $F^A$ linearly around $x$ (the rank of the map's Jacobian at $x$).
The rank is low at $x$ if perturbing $x$
can influence the $n$-qubit unitary only along few directions in $\mathrm{SU}(2^n)$.

Crucially, we prove that $F^A$ has the same rank 
throughout the domain, except on a measure-zero set, where $F^A$ has a lesser rank. 
The greater, ``dominating'' rank is the dimension of $\mathcal{U}(A)$.
To formalize this result, let $E_r$ denote the locus of points at which $F^A$ has a rank of $r \geq 0$.
Let $E_{<r} = \bigcup_{r'<r} E_{r'}$ denote the set of points where $F^A$ has a lesser rank. 
Let $r_{\mathrm{max}}$ denote the maximum rank achieved by $F^A$ at any point $x$. 
We prove the following lemma in Appendix~B, using the dimension theory of real algebraic sets.

\begin{restatable}[Low-rank locus]{lemma}{locus}\label{lemma:locus} 
	The low-rank locus $E_{<r_{\mathrm{max}}}$ is an algebraic set  of measure $0$ and so is closed (in the Lie-group topology). Equivalently, $E_{r_{\mathrm{max}}}$ is an open set of measure $1$.
	Consequently, $d_A = r_{\mathrm{max}}$.
\end{restatable}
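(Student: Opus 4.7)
The plan is to leverage the fact that the contraction map $F^A$ is, in local coordinates, a polynomial map between real algebraic varieties, so the rank-drop loci are cut out by polynomial equations, and the irreducibility of $\mathrm{SU}(4)^{\times R}$ then forces every proper algebraic subset to have Haar measure zero.

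First I would trivialize the tangent bundles of $\mathrm{SU}(4)^{\times R}$ and $\mathrm{SU}(2^n)$ by right-invariant vector fields so that the differential $dF^A_x$ is represented by a matrix $J(x)$ whose entries are polynomials in the matrix entries of $x=(U_1,\ldots,U_R)$. The rank of $F^A$ at $x$ equals the rank of $J(x)$, and $E_{<r_{\mathrm{max}}}$ is precisely the common zero locus of all $r_{\mathrm{max}}\times r_{\mathrm{max}}$ minors of $J(x)$. Since each such minor is a polynomial on $\mathrm{SU}(4)^{\times R}$, this identifies $E_{<r_{\mathrm{max}}}$ as a real algebraic subset, which is Zariski-closed and hence closed in the Lie-group topology.

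Next I would argue that $E_{<r_{\mathrm{max}}}$ has measure zero. By the definition of $r_{\mathrm{max}}$, some $r_{\mathrm{max}}\times r_{\mathrm{max}}$ minor is nonzero at some point $x_0$, so the corresponding polynomial does not vanish identically on $\mathrm{SU}(4)^{\times R}$. Because $\mathrm{SU}(4)$ is a connected, smooth real algebraic group, $\mathrm{SU}(4)^{\times R}$ is an irreducible real algebraic variety (a fact reviewed in Appendix~\ref{app_Alg_Review}), and on such a variety the zero locus of a nontrivial polynomial has strictly smaller algebraic dimension and therefore Haar measure zero. Equivalently, $E_{r_{\mathrm{max}}}$ is open and has full measure. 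Finally, for the identification $d_A=r_{\mathrm{max}}$, I would apply the constant rank theorem on the open stratum $E_{r_{\mathrm{max}}}$, where the rank is constant: this shows that around every such point the image $\mathcal{U}(A)$ contains a submanifold of $\mathrm{SU}(2^n)$ of dimension exactly $r_{\mathrm{max}}$, hence $d_A\geq r_{\mathrm{max}}$. For the reverse inequality, the domain decomposes into the finitely many locally closed constant-rank loci $E_r$ with $r\leq r_{\mathrm{max}}$, the image of each is locally a manifold of dimension $r$, and the dimension of the resulting semialgebraic set is the maximum of these, namely $r_{\mathrm{max}}$.

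The main obstacle is the measure-zero step: it hinges on the irreducibility of $\mathrm{SU}(4)^{\times R}$ as a real algebraic variety and on the standard principle that nontrivial polynomials do not vanish on a positive-measure subset of an irreducible variety. These are the genuinely algebraic-geometric inputs to be imported from Appendix~\ref{app_Alg_Review}; once they are available, the Jacobian-minors identification and the constant rank theorem reduce everything else to routine bookkeeping.
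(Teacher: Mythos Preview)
Your proposal is correct and follows essentially the same route as the paper: trivialize the tangent bundles (the paper does this with exponential charts, you with right-invariant fields---equivalent), identify $E_{<r_{\mathrm{max}}}$ as the vanishing locus of the $r_{\mathrm{max}}\times r_{\mathrm{max}}$ minors of the Jacobian, use Zariski irreducibility of $\mathrm{SU}(4)^{\times R}$ (connected algebraic group) to force this proper algebraic subset to have lower dimension and hence measure zero, and invoke the constant-rank theorem on $E_{r_{\mathrm{max}}}$ for $d_A\geq r_{\mathrm{max}}$. The only substantive deviation is the reverse inequality $d_A\leq r_{\mathrm{max}}$: the paper appeals directly to Sard's theorem in its Hausdorff-dimension form (the image of the rank-$r$ locus has Hausdorff dimension $\leq r$), whereas you stratify the domain into constant-rank pieces and bound the image of each---your phrasing ``the image of each is locally a manifold of dimension $r$'' needs a small correction (the constant-rank theorem does not apply on the non-open strata $E_r$ for $r<r_{\mathrm{max}}$; you must first pass to a smooth stratification and note the restricted map has rank $\leq r$), but once tightened your argument and Sard's are interchangeable here.
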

\noindent
Lemma~\ref{lemma:locus} guarantees that the contraction map's rank equals the accessible dimension $d_A$ almost everywhere in $\mathcal{U}(A)$.

We now turn to the proof of Proposition~\ref{thm:lower-bound-on-dA}. 
The rank $r$ of $F^A$ at each point $x$ lower-bounds $r_{\mathrm{max}}$, by definition. 
Consider an architecture $A_T$ of $T$ \new{blocks, each containing a backwards light cone}. 
We identify an $x$ at which $r$ is lower-bounded by
a quantity that grows linearly with $R$
(the number of gates in the architecture $A_T$).
We demonstrate the point's existence
by constructing circuits from Clifford gates.

Consider a choice 
$x=(U_1, U_2, \ldots, U_R)
\new{=:} 
(U_j)_j$ of unitary gates.
Perturbing a $U_j$ amounts to appending an infinitesimal unitary:
$U_j \mapsto \tilde{U}_j = e^{i\epsilon H} U_j$.
The $H$ denotes a 2-qubit Hermitian operator, and 
$\epsilon \in \rr$.
$H$ can be written as a linear combination of 2-qubit Pauli strings $S_k$.
(An $n$-qubit Pauli string is a tensor product of $n$ single-site operators, each of which is a Pauli operator [$X$, $Y$, or $Z$] or the identity, $\id_1$.
The $4^n$ $n$-qubit Pauli strings form a basis for the space of $n$-qubit Hermitian operators.)
Consider perturbing each gate $U_j$ using a combination of all 15 nontrivial 2-qubit Pauli strings
[Fig.~\ref{fig:PerturbationCliffordCircuits}(a)]:
$x=(U_j)_j \mapsto \tilde{x} = 
\LParen \exp(i\sum_{k=1}^{15} \epsilon_{j,k} S_k) \,
U_j \RParen_j$, wherein
$\epsilon_{j,k} \in \rr$. 
The perturbation $x\mapsto \tilde{x}$ causes a perturbation $U=F^{A_T}(x)\mapsto \tilde{U}=F^{A_T}(\tilde{x})$  of the image under $F^{A_T}$.
The latter perturbation is, to first order,
$\partial_{\epsilon_{j,k}} \tilde{U}\bigr|_{\epsilon_{j,k}=0}$. 
This derivative can be expressed as the original circuit
with the Pauli string $S_k$ inserted immediately after the gate $U_j$ [Fig.~\ref{fig:PerturbationCliffordCircuits}(b)].

\begin{figure*}
	\centering
	\includegraphics{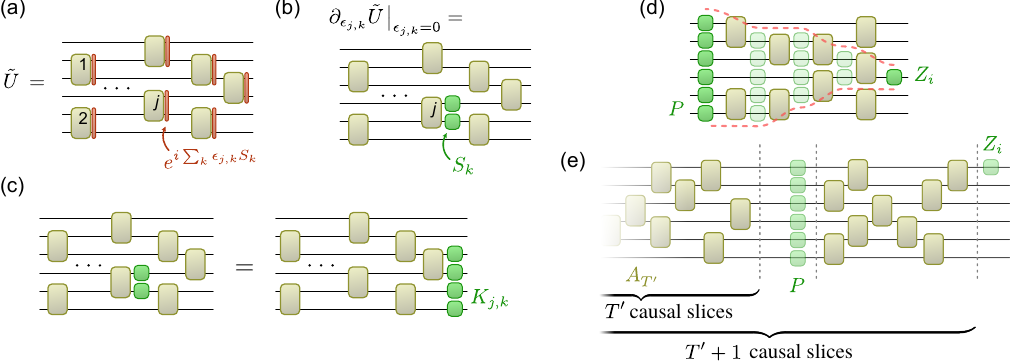}
	\caption{Our core technical result is a lower bound on the accessible dimension (see Fig.~\ref{fig_Contraction_map}).
		We prove this bound using a construction based on 
		Clifford circuits.  
		\textbf{(a)}~Each gate $U_j$ is perturbed with a unitary 
		$e^{i \epsilon_{j,k} S_k}$,
		generated by a 2-qubit Pauli operator $S_k$ 
		and parameterized with an infinitesimal 
		$\epsilon_{j,k} \in \rr$.
		Perturbing the gate perturbs the $n$-qubit unitary,
		turning $U$ into $\tilde{U} \approx U$.
		\textbf{(b)}~A key quantity is the derivative of $\tilde{U}$ with respect to a parameter $\epsilon_{j,k}$, evaluated at $U$.
		Taking this derivative is equivalent to inserting the Pauli string $S_k$ immediately after the gate $U_j$. 
		\textbf{(c)}~The derivative depicted in panel (b) is equivalent to following the circuit with a Hermitian operator $K_{j,k}$
		[Eq.~\eqref{eq_Decomp_Perturb}].  The operator $K_{j,k}$ results from
		conjugating $S_k$ with the gates after $U_j$.
		If the circuit consists of Clifford gates, then $K_{j,k}$ is a Pauli string, since Clifford gates map the Pauli strings to Pauli strings.  
		Therefore, a perturbation of $U_j$ in the direction of $S_k$ results in a perturbation of the resulting unitary $U$ in the direction of $K_{j,k}$ in $\mathrm{SU}(2^n)$.
		\textbf{(d)}~The following is true of every \new{backwards-light-cone--containing block} and every Pauli string $P$ (leftmost green squares): The \new{block}'s gates can be chosen to be Cliffords that map $P$ to a single-site $Z$.
		The Clifford gates first map $P$ to a Pauli string that acts nontrivially on fewer qubits (pale green squares), then to a Pauli string on fewer qubits, and so on until the Pauli string dwindles to one qubit (rightmost green square).
		\textbf{(e)}~Our lower bound is proven by recursion.
		Consider an architecture $A_{T'}$, formed from $T' < 4^n - 1$ \new{backwards-light-cone--containing blocks},
		whose accessible dimension is $\geq T'$.  
		There exist gates $U_1, U_2, \ldots,U_{R'}$ such that
		that $T'$ linearly independent Pauli operators 
		$K'_{j_m,k_m}$ (wherein $m = 1, 2, \ldots, T'$)
		result from perturbing the gates, as described in (a)--(c).
		Consider a Pauli operator $P$ that is not in $\{ K'_{j_m,k_m} \}$.
		We can append to $A_{T'}$ a \new{backwards-light-cone--containing block}, formed from Clifford gates,
		that maps $P$ to a single-site $Z$, as depicted in panel (d). 
		This $Z$ is an important direction in $\mathrm{SU}(2^n)$:
		Consider perturbing the \new{block}'s final gate
		via the procedure in (a)--(c).
		The image $\mathcal{U}(A_{T'})$ is perturbed, as a result, in the direction $Z$.
		Thus, $T'+1$ linearly independent Pauli operators (the operators $K'_{j_m,k_m}$ and $P$) result from 
		perturbing gates in the extended circuit.
		Therefore, the extended circuit's accessible dimension is $\geq T'+1$.
	}
	\label{fig:PerturbationCliffordCircuits}
\end{figure*}

The rank of $F^{A_T}$ at $x$ is the number of parameters $\epsilon_{j,k}$ needed to parameterize a general perturbation of $U=F^{A_T}(x)$ within the image set $\mathcal{U}(A_T)$.
To lower-bound the rank of $F^{A_T}$ at a point $x$, we need only show
that $\geq r$ parameters $\epsilon_{j,k}$ perturb $F^{A_T}(x)$ in independent directions.
To do so, we express the derivative as
\begin{align}
\label{eq_Decomp_Perturb}
\partial_{\epsilon_{j,k}} F^{A_T}(\tilde{x}) \bigr|_{\epsilon_{j,k}=0}
= K_{j,k} F^{A_T}(x) \ ,
\end{align}
wherein $K_{j,k}$ denotes a Hermitian operator
[Fig.~\ref{fig:PerturbationCliffordCircuits}(c)].
$K_{j,k}$ results from conjugating $S_k$,
the Pauli string inserted into the circuit after gate $U_j$,
with the later gates. 
The physical significance of $K_{j,k}$ follows from perturbing the gate $U_j$ in the direction $S_k$
by an infinitesimal amount $\epsilon_{j,k}$.
The image $F^{A_T}(x)$ is consequently perturbed,
in $\mathrm{SU}(2^n)$, in the direction $K_{j,k}$.

We choose for the gates $U_j$ to be Clifford operators. 
(The Clifford operators are the operators that map the Pauli strings to the Pauli strings, to within a phase, via conjugation.
For every Clifford operator $C$ and Pauli operator $P$,
$CPC^\dag$ equals a phase times a Pauli string~\cite{calderbank1997quantum,calderbank1998quantum,gottesman1997stabilizer,bolt1961cliffordI,bolt1961cliffordII}.)
As a result, the operators $K_{j,k}$ are Pauli strings (up to a phase).
Two Pauli strings are linearly independent if and only if they differ.
For Clifford circuits, therefore, we can easily verify whether perturbations
of $x$  
cause independent perturbation directions in
$\mathrm{SU}(2^n)$: We need only show that the resulting operators $K_{j,k}$ are distinct.

We apply that fact to prove Proposition~\ref{thm:lower-bound-on-dA}, 
using the following observation.
Consider any Pauli string $P$ and any \new{backwards-light-cone--containing block} of any architecture. 
We can insert Clifford gates into the \new{block} such that
two operations are equivalent:
(i) operating on the input qubits with $P$ before the extended \new{block} and (ii) operating with the extended \new{block}, then with a one-qubit $Z$. Figure~\ref{fig:PerturbationCliffordCircuits}(d) depicts the equivalence, which follows from the structure of \new{backwards light cones}. 
We can iteratively construct a Clifford unitary that reduces the Pauli string's weight 
until producing a single-qubit operator.
See Appendix~\ref{app_Prove_Lemma1} for details.

We now prove Proposition~\ref{thm:lower-bound-on-dA} by recursion.  Consider an $R'$-gate architecture $A_{T'}$ formed from 
$T' < 4^n - 1$ \new{blocks, each containing a backwards light cone} and each of $\leq L$ gates.  
Assume that there exists a list $x'$ of Clifford gates,
which can be slotted into $A_{T'}$, such that
$F^{A_{T'}}$ has a rank $\geq T'$ at $x'$.
Consider appending a \new{backwards-light-cone--containing block} to $A_{T'}$.
The resulting architecture corresponds to a contraction map
whose rank is $\geq T'+1$, we show.

By assumption, we can perturb $x'$ such that its image, $F^{A_{T'}}(x')$, is perturbed in $\geq T'$ independent directions in $\mathrm{SU}(2^n)$.
These directions can be represented by Pauli operators 
$K'_{j_m,k_m}$, wherein $m=1, 2, \ldots, T'$,
by Eq.~\eqref{eq_Decomp_Perturb}.
Let $P$ denote any Pauli operator absent from $\{ K'_{j_m,k_m} \}$.  
We can append to $A_{T'}$ a \new{backwards-light-cone--containing block}, forming an architecture $A_{T'+1}$ of 
$T'+1$ \new{backwards light cones}.
We design the new \new{block} from Clifford gates such that
two operations are equivalent:
(i)~applying $P$ to the input qubits before the extended \new{blocks} 
and (ii)~applying the extended \new{block}, 
then a single-site $Z$. 
We denote by $x''$ the list of gates in $x'$ augmented with the gates in
the extended \new{block}.
Conjugating the $K'_{j_m,k_m}$ with the new \new{block} yields operators $K''_{j_m,k_m}$, for $m=1, 2, \ldots, T'$.
They represent the directions in which the image $F^{A_{T'+1}}(x'')$
is perturbed by the original perturbations of $A_{T'}$.
The $K''_{j_m,k_m}$ are still linearly independent Pauli operators.
Also, the $K''_{j_m,k_m}$ and the single-site $Z$ form an
independent set,  
because $P$ is not in $\{ K'_{j_m,k_m} \}$.
Meanwhile, the single-site $Z$ is a direction in which
the last \new{block}'s final gate can be perturbed.
The operators $K_{j_m,k_m}$, augmented with the single-site $Z$, therefore span $T'+1$ independent directions along which $F^{A_{T'+1}}(x'')$ can be perturbed.
Therefore, $T'+1$ lower-bounds the rank of $F^{A_{T'+1}}$.

We apply the above argument recursively, starting from an
architecture that contains no gates.
The following result emerges:
Consider any architecture $A_T$ that consists of $T$ \new{backwards-light-cone--containing blocks}.
At some point $x$, the map $F^{A_T}$ has a rank lower-bounded by $T$.
Lemma~\ref{lemma:locus} ensures that the
same bound applies to $d_{A_T}$.

To conclude the proof of Theorem~\ref{theorem:linear growth}, 
we address an architecture $A'$ whose accessible dimension satisfies $d_{A'} < d_{A_T}$. 
Consider sampling a random circuit with the architecture $A_T$.
We must show that the circuit has a zero probability of implementing a unitary in $\mathcal{U}(A')$.
To prove this claim, we 
invoke the constant-rank theorem:
Consider any map whose rank is constant locally---in any open neighborhood of any point in the domain. In that neighborhood, the map is equivalent to a projector, up to a diffeomorphism.
We can apply the constant-rank theorem to the contraction map: 
$F^{A_T}$ has a constant rank throughout 
$E_{r_{\mathrm{max}}}$, by Lemma~\ref{lemma:locus}.
Therefore, $F^{A_T}$ acts locally as a projector throughout $E_{r_{\mathrm{max}}}$---and
so throughout $\mathrm{SU}(4)^{\times R}$,
except on a measure-0 region, by Lemma~\ref{lemma:locus}.
Consider mapping an image back, through a projector, to a preimage.
Suppose that the image forms a subset of dimension lower than the whole range's dimension. 
The backward-mapping just adds degrees of freedom to the image.
Therefore, the preimage locally has a dimension less than the domain's dimension.
Hence the preimage is of measure 0 in the domain.
We use the unitary group's compactness to elevate this local statement to the global statement in Theorem~\ref{theorem:linear growth}.
\newpage

\cleardoublepage
\setcounter{page}{0}
\setcounter{equation}{0}
\setcounter{footnote}{0}
\thispagestyle{empty}
\begin{center}
	\textbf{\large Supplementary Material for ``Linear growth of quantum circuit complexity''}\\
	\vspace{2ex}
	J.\ Haferkamp, P.\ Faist, N.\ B.\ T.\ Kothakonda, J.\ Eisert and N.\ Yunger Halpern
\end{center}

\appendix
\section{Algebraic and semialgebraic sets}
\label{app_Alg_Review}

For convenience, we review elementary aspects of algebraic geometry over the real numbers. We apply these properties in the proof of Theorem~\ref{theorem:linear growth}. 
Ref.~\cite{bochnak2013real} contains a more comprehensive
treatment.

\begin{definition}[Algebraic set]
\label{def_Alg_Subset}
	A subset $V\subseteq \rr^m$ is called an \emph{algebraic set}, or an \emph{algebraic variety}, if, 
	for a set of polynomials $\{f_j\}_j$,
	\begin{equation}
	V=\{x\in \rr^m| f_j(x)=0 \} .
	\end{equation}
	A subset $V'\subseteq V$ is called an \emph{algebraic subset} if $V'$ is an algebraic set. 
	We call a subset $W\subseteq \rr^m$ a \emph{semialgebraic set} if, for sets $\{f_j\}_j$ and $\{g_k\}_k$ of polynomials,
	\begin{equation}
	W=\{x\in \rr^m| f_j(x)= 0, g_k(x)\leq 0 \} .
	\end{equation}
\end{definition}
\noindent
A natural topology on algebraic sets is the Zariski topology.

\begin{definition}[Zariski topology]
	Let $V$ denote an algebraic set.
	The Zariski topology is the unique topology whose closed sets
	are the algebraic subsets of $V$.
\end{definition}
A traditional definition of ``dimension'' for algebraic sets involves irreducible sets.
\begin{definition}[Irreducible set]
Let $X$ denote a topological space. 
$X$ is called \emph{irreducible} if it is not the union of two proper closed subsets. 
\end{definition}
\begin{definition}[Dimension of algebraic sets]\label{def:dimension}
Let $V$ be an algebraic set that is irreducible with respect to the Zariski topology.
The dimension of $V$ is the maximal length $d$ of any chain $V_0\subset V_1\subset \dots \subset V_d$ of distinct nonempty irreducible algebraic subsets of $V$.
\end{definition}
The relevant algebraic sets in the proof of Theorem~\ref{theorem:linear growth} are $\mathrm{SU}(4)^{\times R}$ and $\mathrm{SU}(2^n)$.
Our interest in semialgebraic 
sets stems from the following principle.
In the following, we refer to a function $F: \mathbb{R}^n \to \mathbb{R}^m$
as a \emph{polynomial map} if its entries are polynomials in the entries of its
input.

\begin{theorem}[Tarski-Seidenberg principle]\label{theorem:Tarski}
	Let $F:\rr^n\to \rr^m$ be a polynomial map. 
	If $W$ is a semialgebraic set, so is $F(W)$.
\end{theorem}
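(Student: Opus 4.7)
\emph{Proof proposal.} The plan is to reduce the statement to a standard quantifier elimination fact about projections of semialgebraic sets, and then to prove that fact via a sign-pattern (Thom encoding) argument. The classical route is to pass from the map $F$ to a single projection map, which is a much more tractable object; all the real content will then be in the one-variable projection step.

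First I would make the graph reduction. Observe that the graph $\Gamma_F = \{(x,y)\in \rr^n\times \rr^m : y_i = F_i(x),\, i=1,\dots,m\}$ is an algebraic set because its defining conditions are polynomial equalities, and hence in particular a semialgebraic set. The image $F(W)$ can be rewritten as
\begin{equation}
F(W) = \pi_m\bigl((W\times \rr^m)\cap \Gamma_F\bigr),
\end{equation}
where $\pi_m:\rr^{n+m}\to \rr^m$ is the projection onto the last $m$ coordinates. Since the class of semialgebraic sets is closed under finite unions, finite intersections, and complements (these are immediate from the Boolean algebra of the defining polynomial inequalities), the set inside the projection is semialgebraic. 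So it suffices to prove that the image of a semialgebraic set under the projection $\pi:\rr^{k+1}\to \rr^k$ that drops the last coordinate is semialgebraic; the general projection $\pi_m$ follows by iterating this one-variable projection $n$ times.

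Second I would attack the one-variable projection. Suppose $S\subseteq \rr^{k+1}$ is defined by a Boolean combination of conditions $p_j(x,t)\,\square_j\, 0$, with $\square_j \in \{=,<\}$ and $p_j \in \rr[x_1,\dots,x_k,t]$. The key idea is that, for fixed $x\in\rr^k$, whether $\pi(S)$ contains $x$ depends only on which real roots the univariate polynomials $p_j(x,\cdot)$ have and on the signs of the $p_j(x,\cdot)$ on the intervals between consecutive roots. Crucially, the collection of possible sign vectors of $p_1(x,\cdot),\dots,p_r(x,\cdot)$ together with all of their derivatives in $t$ (the Thom encoding of the roots) is controlled by the signs of finitely many polynomials in $x$ alone—namely the coefficients of the $p_j$ as polynomials in $t$, together with the signs of finitely many subresultants and leading coefficients obtained by Sturm/subresultant computations. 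This is the heart of the argument and, as expected, the main obstacle: one must verify that for each admissible sign vector, whether $S$ is realized on some interval or at some root can be read off from a fixed Boolean combination of polynomial sign conditions in $x$. Taking the finite disjunction over sign vectors that do witness membership in $\pi(S)$ expresses $\pi(S)$ as a finite Boolean combination of polynomial inequalities in $x$, which is exactly the definition of a semialgebraic set.

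Finally I would assemble the pieces: iterate the single-coordinate projection $n$ times to project $\Gamma_F \cap (W\times \rr^m)$ onto the last $m$ coordinates, conclude that $F(W)$ is semialgebraic, and remark that the standard references (for instance Bochnak--Coste--Roy) carry out the sign-pattern step in detail. The hard step is the univariate elimination: the finite list of ``parameter'' polynomials whose signs determine the sign pattern in $t$ exists for algebraic reasons (subresultants), but keeping track of the resulting Boolean combination requires care. Everything else—the graph trick, the reduction to a single variable, and the closure of semialgebraic sets under Boolean operations—is routine.
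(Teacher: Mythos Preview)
Your outline is a correct sketch of the standard proof of the Tarski--Seidenberg theorem via the graph trick and one-variable quantifier elimination (subresultants/Thom encoding), and it is essentially the route taken in Bochnak--Coste--Roy, which the paper cites. However, the paper does not actually prove this statement: Theorem~\ref{theorem:Tarski} appears in an appendix that explicitly reviews background from real algebraic geometry, and the Tarski--Seidenberg principle is quoted there as a known result with a reference to~\cite{bochnak2013real}, not proved. So there is nothing in the paper to compare your proposal against; you have supplied a standard proof where the paper simply invokes the literature.
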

\noindent
The Tarski-Seidenberg principle applies to the map that contracts sets of quantum gates. 
This application is important for us, because it provides a natural notion of dimension for the contraction map's image.

All semialgebraic sets (and hence all algebraic sets) decompose into smooth manifolds.

\begin{theorem}[Stratification of semialgebraic sets]
\label{thm_Stratification}
	If $W$ is a semialgebraic set, then
	$W=\bigcup_{j=1}^N M_j$, 
	wherein each $M_j$ denotes a smooth manifold.
	If $W$ is an algebraic set of dimension $d$ in the sense of
	Definition~\ref{def:dimension},
	then 
	$\max_j \{ \mathrm{dim} (M_j) \} = d$.
\end{theorem}
\noindent
\new{This $\max_j\{\dim(M_j)\}$ does not depend on the decomposition chosen.
This independence motivates the following definition:

\begin{definition}[Dimension of semialgebraic sets]
Let $W$ denote a semialgebraic set, such that $W = \bigcup_{j=1}^N M_j$, wherein each $M_j$ denotes a manifold. The greatest dimension of any manifold, $\max_j \{ {\rm dim} (M_j) \}$, is the semialgebraic set's dimension.
\end{definition}
\noindent
This definition generalizes Definition~\ref{def:dimension}, 
due to Theorem~\ref{thm_Stratification}.
One more fact about semialgebraic sets' dimensions will prove useful:}

\begin{lemma}[Dimension of an image]\label{lemma:dimensionimage}
		Let $F:\rr^n\to \rr^m$ be a polynomial map. 
	If $W$ is a dimension-$d$ semialgebraic set,
	$F(W)$ is of dimension $\leq d$.
\end{lemma}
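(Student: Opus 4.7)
\emph{Proof plan for Lemma~\ref{lemma:dimensionimage}.} The plan is to combine Theorem~\ref{theorem:Tarski} with the stratification result (Theorem~\ref{thm_Stratification}) and the constant-rank theorem. First, since $F$ is polynomial, $F(W)$ is semialgebraic by Tarski--Seidenberg, so Theorem~\ref{thm_Stratification} endows both $W$ and $F(W)$ with canonical notions of dimension. Stratifying $W = \bigcup_j M_j$ into smooth semialgebraic manifolds with $\max_j \dim M_j = d$ and noting $F(W) = \bigcup_j F(M_j)$, it suffices to prove the following local statement: if $M \subseteq \rr^n$ is a smooth semialgebraic manifold of dimension $k$, then $\dim F(M) \leq k$. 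A finite union of semialgebraic sets has dimension equal to the maximum dimension among its members, so this reduction is lossless.

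To prove that local statement I would induct on $k$. Let $r \leq k$ denote the maximum rank attained by the Jacobian of $F|_M$, and let $M_{\mathrm{reg}} \subseteq M$ be the subset on which this maximum is attained. Then $M_{\mathrm{reg}}$ is the complement in $M$ of the algebraic set cut out by the $r \times r$ minors of $dF|_M$, hence open, dense, and semialgebraic in $M$. On each connected component of $M_{\mathrm{reg}}$ the rank of $F$ is constantly $r$, so by the constant-rank theorem, in suitable local coordinates $F$ coincides with the projection $\rr^k \to \rr^r$. Consequently $F(M_{\mathrm{reg}})$ is locally a smooth submanifold of dimension $r \leq k$, so $\dim F(M_{\mathrm{reg}}) \leq k$ by Theorem~\ref{thm_Stratification}.

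For the complement $M_{\mathrm{sing}} := M \setminus M_{\mathrm{reg}}$, observe that $M_{\mathrm{sing}}$ is the intersection of $M$ with a proper algebraic subset of $\rr^n$, where ``proper'' means that the defining polynomials do not vanish identically on $M$ (by maximality of $r$). Since $M$ is locally diffeomorphic to $\rr^k$, any such locus has semialgebraic dimension strictly less than $k$. By the induction hypothesis, $\dim F(M_{\mathrm{sing}}) \leq \dim M_{\mathrm{sing}} < k$; the base case $k=0$ is immediate because then $M$ is a finite set of points. Combining $F(M) = F(M_{\mathrm{reg}}) \cup F(M_{\mathrm{sing}})$ gives $\dim F(M) \leq k$, and stratum-by-stratum application yields $\dim F(W) \leq d$.

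The main obstacle is the dimension bound for $M_{\mathrm{sing}}$: one must verify that the semialgebraic dimension of Theorem~\ref{thm_Stratification} is compatible with the local manifold structure, so that polynomials not identically vanishing on $M$ genuinely carve out a lower-dimensional subset. This local-to-global compatibility is standard in real semialgebraic geometry (see Ref.~\cite{bochnak2013real}), and once it is in hand the induction closes cleanly. As a side benefit, the same argument shows that any continuous semialgebraic map does not increase semialgebraic dimension, which is the general version of the statement.
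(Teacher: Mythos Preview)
Your proposal is sound and takes a more hands-on route than the paper. The paper does not actually prove this lemma: it simply cites Propositions~2.8.6 and~2.8.7 of Ref.~\cite{bochnak2013real}, noting that polynomial maps are a special case of semialgebraic maps. Your argument---stratify $W$, apply the constant-rank theorem on each stratum, and induct on dimension to handle the singular locus---is effectively a sketch of why those propositions hold, and it supplies geometric insight that the bare citation does not.

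Two minor caveats are worth flagging. First, describing $M_{\mathrm{sing}}$ as ``the intersection of $M$ with a proper algebraic subset of $\rr^n$'' is imprecise: the rank of $dF|_{T_xM}$ depends on the tangent space $T_xM$, which is not an algebraic datum on the ambient $\rr^n$. What is true is that $M_{\mathrm{sing}}$ is a closed \emph{semialgebraic} subset of $M$ (the tangent bundle of a smooth semialgebraic manifold is itself semialgebraic), and on each connected component of $M$ it has empty interior---hence strictly lower semialgebraic dimension---because the relevant minors are Nash functions not identically zero there. Second, and relatedly, if $M$ has several connected components the global maximum rank $r$ may fail to be attained on some of them, so $M_{\mathrm{sing}}$ could contain an entire $k$-dimensional component and your induction on $k$ would stall; defining $r$ component-by-component fixes this immediately. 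You already identify the dimension drop for $M_{\mathrm{sing}}$ as the crux and defer to Ref.~\cite{bochnak2013real} for the local-to-global compatibility, so in the end both your argument and the paper's bottom out in the same reference---yours just unpacks more of the geometry along the way.
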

\noindent
The bound follows from combining the results of
Ref.~\cite[Prop.~2.8.7]{bochnak2013real} with the results of
Ref.~\cite[Prop.~2.8.6]{bochnak2013real}. (Ref.~\cite{bochnak2013real} invokes a semialgebraic mapping, which encompasses polynomial maps.)

\section{Proof of the main theorem and lemmata} 
\label{app_Prove_Lemma1}

In this appendix, we prove Lemma~\ref{lemma:locus}, Lemma~\ref{lemma:pointofhighrank}, and the main theorem.
The proofs rely on the topics reviewed in Appendix~\ref{app_Alg_Review},
as well as the following notation and concepts.
In differential geometry, the rank of $F^A$ at the point 
$x = (U_1, U_2, \ldots, U_R)$ is defined as 
the rank of the derivative $D_xF^A$.
Mapping lists of gates to unitaries, $F$ is a complicated object.
We can more easily characterize a map from real numbers to real numbers. 
Related is a map from Hermitian operators to 
Hermitian operators:
An $n$-qubit state evolves under a Hamiltonian represented by 
a $2^n \times 2^n$ Hermitian operator, which has 
$(2^n)^2 = 4^n$ real parameters.
Therefore, for convenience, we shift focus from unitaries to their Hermitian generators.
We construct a map whose domain is the algebra 
$\mathfrak{su}(4)^{\times R} 
\simeq  \rr^{15R}$
that generates SU$(4)^{\times R}$.
The range is the set of $n$-qubit Hermitian operators, 
$\mathfrak{su}(2^n) \simeq \rr^{4^n}$.
We construct such a map from three steps,
depicted by the dashed lines in Fig.~\ref{fig_Chart}.

\begin{figure}[hbt]
\centering
\includegraphics[width=.45\textwidth, clip=true]{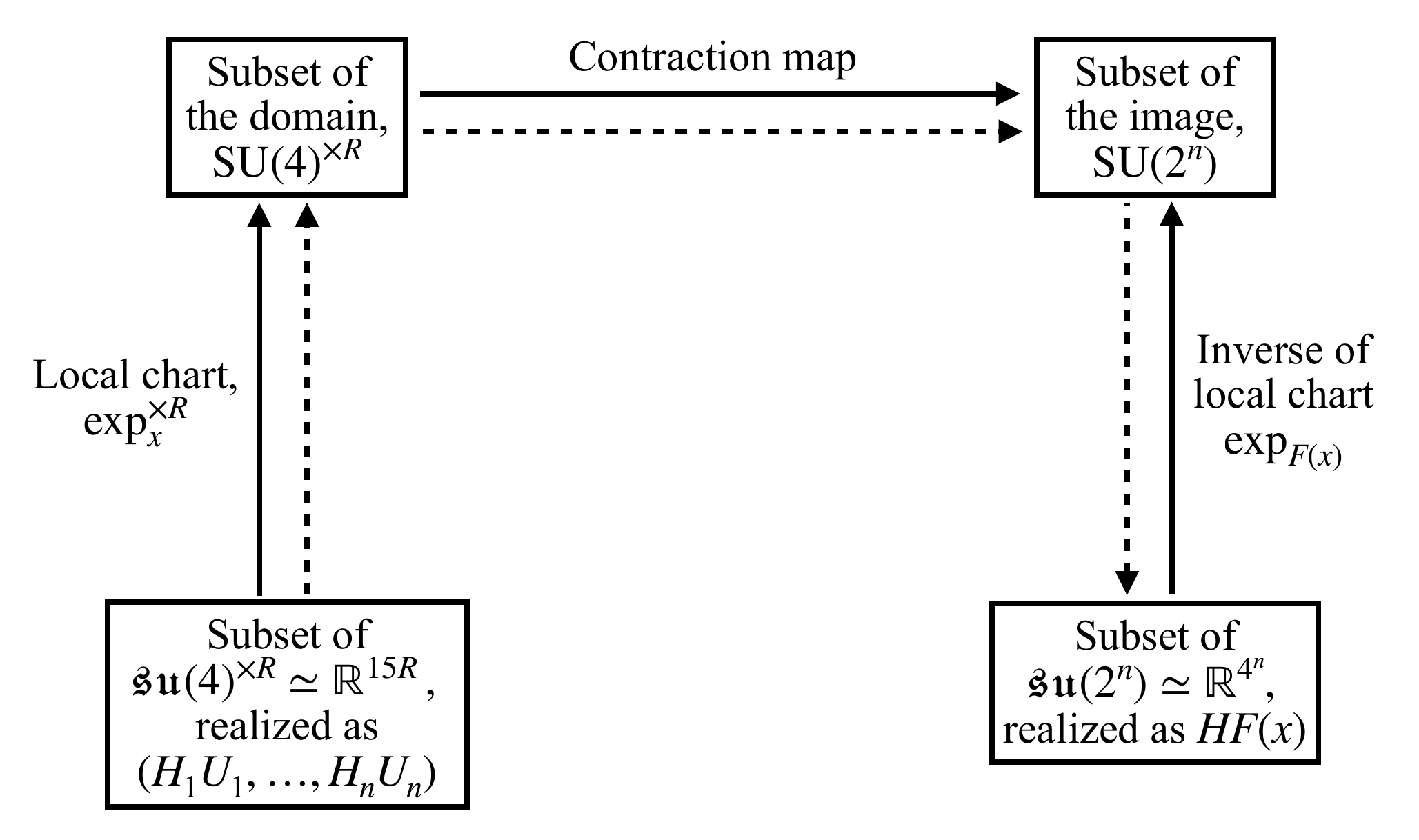}
\caption{Three-part map used in the proof of Lemma~\ref{lemma:locus}.
$H_j$ denotes the $j^{\rm th}$ two-qubit Hermitian operator,
$U_j$ denotes the $j^{\rm th}$ two-qubit unitary,
and $H$ denotes an $n$-qubit Hermitian operator.}
\label{fig_Chart}
\end{figure}

The first step is a chart, a diffeomorphism that maps one manifold to another invertibly.
Our chart acts on the algebra 
$\mathfrak{su}(4)^{\times R}$ that generates $\mathrm{SU}(4)^{\times R}$.
To define the chart, we parameterize an element $H$ of 
the $j^{\rm th}$ copy of $\mathfrak{su}(4)$: 
\begin{equation}
	H=\sum_{\substack{
	\alpha, \beta \in \{\mathbb{1},X,Y,Z\} \\
	(\alpha, \beta)\neq (\mathbb{1},\mathbb{1})}} 
	\lambda_{j, \alpha, \beta} \;
	\alpha \otimes \beta,
\end{equation}
wherein $\lambda_{j, \alpha, \beta}\in \rr$.
For each point $x= (U_1, U_2, \dots ,U_{R}) \in \mathrm{SU}(4)^{\times R}$, 
we define the local exponential chart 
$\exp^{\times R}_x:\mathfrak{su}(4)^{\times R}\to	
\mathrm{SU}(4)^{\times R}$ 
as 
$\exp^{\times R}_x(H_1,\dots ,H_{R})
:=(e^{\mathrm{i} H_1}U_1 ,\dots ,e^{\mathrm{i} H_{R}}U_{R})$,
and we define the analogous 
$\exp_U:\mathfrak{su}(2^n)\to \mathrm{SU}(2^n)$ as 
$\exp_U(H):= e^{\mathrm{i}H}U$.
These charts are standard for matrix Lie groups.
Both are locally invertible in small neighbourhoods around $x$ and $U$, by a standard result in Lie-group theory~\cite{hall2015lie}.
The three-part map, represented by the dashed lines in Fig.~\ref{fig_Chart}, has the form 
$\exp_{F^A(x)}^{-1} \circ F^A \circ \mathrm{exp}_x^{\times R}$.

We now characterize the map's derivative,
to characterize the derivative of $F^A$,
to characterize the rank of $F^A$.
Denote by $D_0$ the derivative evaluated where
the Hermitian operators are set to zero,
such that each chart reduces to the identity operation.
The image of $D_0\LParen
\exp_{F^A(x)}^{-1} \circ F^A \circ \mathrm{exp}_x^{\times R}
\RParen$ is spanned by the operators 
\begin{equation}
\partial_{\lambda_{j,A,B}} 
\left(\exp^{-1}_{F^A(x)}\circ F^A\circ \exp_x^{\times R} \right)\Big|_0.
\end{equation}
These operators have the form
\begin{align}
   \label{eq_Deriv_Form}
   U_R \ldots U_{j+1} P U_{j} \ldots U_{1} ,
\end{align}
wherein $P$ denotes a two-qubit Pauli operator.
We apply the setting above to prove the following lemma.

\locus*

\begin{proof}
Consider representing an operator~\eqref{eq_Deriv_Form}
as a matrix relative to an arbitrary tensor-product basis.
To identify the matrix's form, we imagine representing
the unitaries in $\mathrm{SU}(4)^{\times R}$ as matrices
relative to the corresponding tensor-product basis for $\mathbb{C}^{2}\otimes \mathbb{C}^{2}$.
Combining the unitary matrices' elements polynomially
yields the matrix elements of~\eqref{eq_Deriv_Form}.

$D_x F^A$ has the same rank as
$D_0 ( \exp_{F^A(x)}^{-1} \circ F^A \circ
\mathrm{exp}_x^{\times R})$, because
$\exp_x^{\times R}$ and $\exp_{F^A(x)}$ are local charts~\cite{lee2013smooth}.
Recall that $E_{<r_{\mathrm{max}}}$ denotes the locus of points, 
in $\mathrm{SU}(4)^{\times R}$, 
where $F^A$ has a rank $<r_{\mathrm{max}}$.
Equivalently, by the invertible-matrix theorem,
$E_{<r_{\mathrm{max}}}$ consists of the points 
where certain minors of 
$D_0( \exp^{-1}\circ F^A \circ 
\mathrm{exp}_x^{\times R})$---the 
determinants of certain collections of
$r_{\mathrm{max}}\times r_{\mathrm{max}}$ matrix elements---vanish. 
The determinants' vanishing implies 
a set of equations polynomial in
the matrix elements of 
$D_0( \exp^{-1}\circ F^A \circ 
\mathrm{exp}_x^{\times R})$---and so,
by the last paragraph, polynomial in
the entries of matrices in $\mathrm{SU}(4)^{\times R}$.          
$\mathrm{SU}(4)^{\times R}$ is a real algebraic set, being the set of operators that satisfy the polynomial equations equivalent to $UU^{\dagger}=\mathbb{1}$ and $\det U=1$.
Thus, by Definition~\ref{def_Alg_Subset},
the points of rank $<r$ form an algebraic subset of $\mathrm{SU}(4)^{\times R}$.

We can now invoke properties of algebraic subsets,
reviewed in Appendix~\ref{app_Alg_Review}.
First, we prove that $\mathrm{SU}(4)^{\times R}$
is irreducible in the Zariski topology.
The Zariski topology of $\mathrm{SU}(4)^{\times R}$ is coarser than the topology inherited from $(\mathbb{C}^{4\times 4})^{\times R}$, 
identified with $\rr^{32R}$. 
As $\mathrm{SU}(4)^{\times R}$ is connected in the finer topology, 
so is $\mathrm{SU}(4)^{\times R}$ connected in the Zariski topology.
This connectedness implies that 
$\mathrm{SU}(4)^{\times R}$ is irreducible,
as $\mathrm{SU}(4)^{\times R}$ is an algebraic group~\cite[Summary~1.36]{milne2017algebraic}.
Being irreducible, $\mathrm{SU}(4)^{\times R}$ has a dimension
\`{a} la Definition~\ref{def:dimension}. 
If the low-rank locus $E_{<r_{\mathrm{max}}}$ is not all of $\mathrm{SU}(4)^{\times R}$, then it is, by Definition~\ref{def:dimension}, a lower-dimensional algebraic subset.
Every dimension-$N$ algebraic subset decomposes into 
a collection of submanifolds, 
each of which has dimension $\leq N$~\cite[Prop.~9.1.8]{bochnak2013real}.
As a proper submanifold has measure $0$, 
$E_{<r_{\mathrm{max}}}$ has measure 0.
As an algebraic subset, $E_{<r_{\mathrm{max}}}$ is closed in the Lie-group topology. 

Finally, we prove that $d_{A}=r_{\mathrm{max}}$.
In a small open neighborhood $V$ of a point 
$x\in E_{ r_{\mathrm{max}} }$, 
the contraction map's rank is constant, by
Lemma~\ref{lemma:locus}.
By the constant-rank theorem~\cite[Thm~5.13]{lee2013smooth}, therefore,
$F^{A_T}$ acts locally as a projector throughout $E_{r_{\mathrm{max}}}$---and
so throughout $\mathrm{SU}(4)^{\times R}$
(except on a region of measure 0,
by Lemma~\ref{lemma:locus}).
The projector has a rank, like $F^{A_T}$, of $r_{\mathrm{max}}$.
A rank-$r_{\mathrm{max}}$ projector has an image that is 
a dimension-$r_{\mathrm{max}}$ manifold.
Hence $r_{\mathrm{max}}\leq d_A$.
The other direction, $d_A\leq r_{\mathrm{max}}$, follows directly from Sard's theorem~\cite{sard1965hausdorff}.
Let $X_r$ denote the set of points where $F^A$ is rank-$r$.
As $F^A$ is a smooth map, Sard's theorem ensures that 
$r$ upper-bounds the Hausdorff dimension of the image $F^A(X_r)$.
As $F^A \LParen \mathrm{SU}(4) \RParen$ 
is a semialgebraic set, it stratifies into manifolds, by Theorem~\ref{thm_Stratification}.
Therefore, the Hausdorff dimension coincides with the semialgebraic set's dimension.
\end{proof}

Lemma~\ref{lemma:locus}, combined with the following lemma, implies Proposition~\ref{thm:lower-bound-on-dA}.

\begin{restatable}[Existence of a high-rank point]{lemma}{pointofhighrank}
	\label{lemma:pointofhighrank}
	Let $T\in \mathbb{Z}_{> 0}$ denote any nonnegative integer.
	Consider any architecture $A_T$ formed from
	$T$ \new{$L$-gate, backwards-light-cone--containing blocks}.
	The map $F^{A_{T}}$ has the greatest rank possible, 
	$r_{\mathrm{max}} \geq T$.
\end{restatable}
	
	\begin{figure}
	\begin{center}
 	\includegraphics[width=0.8\columnwidth]{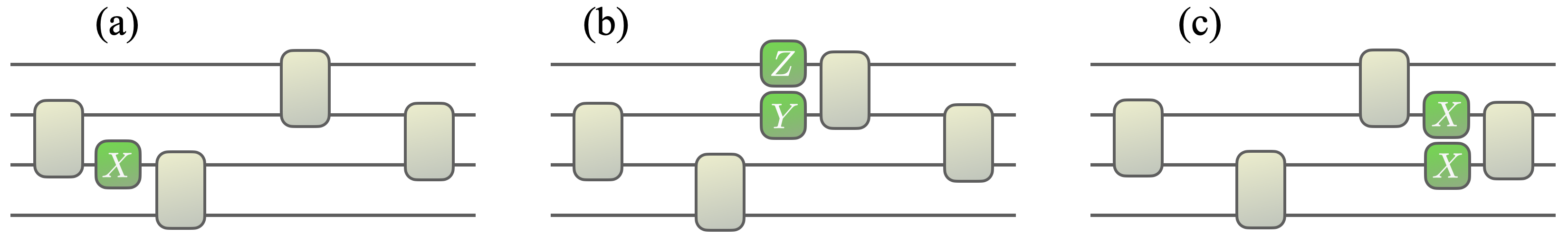}
 	\end{center}
	\caption{Examples of partial derivatives 
	$\partial_{\lambda_{j,\alpha,\beta}} ( 
	\exp^{-1}_{F^A(x)}\circ F^A \circ \exp_x^{\times 4} ) 
	\big\rvert_{ \lambda_{j,\alpha,\beta} = 0}$ 
	that span the image of 
	$D_0( \exp_{F^A(x)} \circ F^A\circ \exp_x^{\times 4})$.}
	\label{figure:partialderivatives}
\end{figure}

\begin{proof}
Without loss of generality, we assume that all $T$ \new{blocks} have identical architectures. This assumption will simplify the notation below.
We can lift the assumption by complicating the notation.
	
Consider an arbitrary point $x=(U_1, U_2, \ldots,U_R)\in \mathrm{SU}(4)^{\times R}$.
For all $x$, the contraction map $F^{A_T}$ has a derivative characterized, in the proof of Lemma~\ref{lemma:locus},
with local charts $\exp_{F^{A_T}(x)}$ and $\exp^{\times R}_x$.
The number of gates in $A_T$ is $R \leq TL$.
The map $F^{A_T}$ has an image spanned by the partial derivatives 
$\partial_{\lambda_{j, \alpha, \beta}}
( \exp^{-1}_{F^{A_T}(x)}\circ F^{A_T}\circ\exp^{\times R}_x)
\big\lvert_{ \lambda_{j, \alpha, \beta} = 0 }$.
Each partial derivative has the form 
\begin{align}
	\label{eq_Lemma2_Help0}
	U_R U_{R-1} \ldots U_{j+1} (\alpha \otimes \beta) 
	U_{j} U_{j-2} \ldots U_1
\end{align}
(Fig.~\ref{figure:partialderivatives}).
$\alpha$ and $\beta$ denote Pauli operators; each acts nontrivially on just one of the two qubits on which $U_j$ acts nontrivially.
We implicitly pad operators with identities wherever necessary,
such that the operators act on the appropriate Hilbert space.
	
We aim to lower-bound the greatest possible rank, $r_{\mathrm{max}}$, of the map $F^{A_T}$.
To do so, we construct a point 
\begin{align}
	x_T = \left(
	\underbrace{C_1^{(1)},\ldots,C_1^{(L)}}_{L~\text{gates}},
	\ldots, \underbrace{C_T^{(1)},\ldots, C_T^{(L)}}_{L~\text{gates}} \right)
	\in \mathrm{SU}(4)^{\times R} \ .
\end{align}
We will choose for the $C_j^{(i)}$'s to be Clifford gates.
A gate's subscript, $j$, labels the \new{blocks} to which the gate belongs.
The superscript, $i$, labels the gate's position within the \new{block}.
The gates constitute a \new{block} as
$C^{(L)}_j C_j^{(L-1)} \ldots C^{(1)}_j =: C_j$.
Our construction of $C_j$ relies on a property of
an arbitrary Pauli operator $Q_j$:
We can choose the Clifford gates $C^{(i)}_j$ such that \new{block} $C_j$ maps $Q_j$ to a $Z$ on qubit $t$: 
$C_j Q_j C_j^{\dagger} = Z_{t}
\equiv \mathbb{1}^{\otimes (t-1)} \otimes Z \otimes \mathbb{1}^{\otimes (n-t)}$.
We now show how the existence of such a Clifford unitary $C_j$ implies Lemma~\ref{lemma:pointofhighrank}. 
Afterward, we show to construct $C_j$.

Let us choose the Pauli strings $Q_j$ 
that guide our construction of the Clifford \new{block} $C_j$.
We choose the $Q_j$'s inductively over $T$ such that 
$\{ (C_T C_{T-1} \ldots C_j) Q_j 
(C_{j-1} C_{j-1} \ldots C_1) \}_{1\leq j\leq T}$ 
is linearly independent. 
We start with an arbitrary Pauli string $Q_1$.
The form of $Q_1$ guides our construction of $C_1$.
Second, we choose for $Q_2$ to be an arbitrary Pauli string 
$\neq C_1 Q_1 C_1^{\dagger}$.
$Q_2$ guides our construction of $C_2$.
Third, we choose for $Q_3$ to be an arbitrary Pauli string outside
$\mathrm{span} \{ C_1C_2Q_1C_2^{\dagger}C_1^{\dagger},
    C_{2}Q_2C_2^{\dagger} \}$.
This $Q_3$ guides our construction of $C_3$.
After $T$ steps, we have constructed all the $Q_j$'s and $C_j$'s.
If $T < 4^n-1$, enough Pauli strings exist that, at each step, a Pauli string lies outside the relevant span.

The operators $(C_T C_{T-1} \ldots C_j) Q_j (C_{j-1} C_{j-2} \ldots C_1)$, 
for $j \in [1, T]$, 
are in the image of 
$D_0 ( \exp^{-1}_{F^{A_{T}}(x_{T})}
\circ  F^{A_{T}}\circ 
\exp_{x_{T}}^{\times (R)} )$:
\begin{align}
\begin{split}\label{eq_partial_extended}
\partial_{\lambda_{jL,\mathbb{1}_1,\mathbb{Z}}} 
\left( \exp^{-1}_{F^{A_{T}}(x_{T})}
\circ  F^{A_{T}} \circ \,
\exp_{x_{T}}^{\times R} \right)\Big|_0 \quad 
&=(C_{T} C_{T-1} \ldots C_{j+1})
(\mathbb{1}_{t-1}\otimes Z_{t}\otimes \mathbb{1}_{n-t}) 
(C_j C_{j-1} \ldots C_1) \\
&=(C_{T} C_{T-1} \ldots C_{j}) 
Q_j (C_{j-1} C_{j-2} \ldots C_1).
\end{split}
\end{align}
We have assumed, without loss of generality, that each \new{block}'s final gate acts on qubit $t$.
For all $j \in [1, T]$, the operators 
$(C_T C_{T-1} \ldots C_j) Q_j 
(C_{j-1} C_{j-2} \ldots C_1)$
are in the image of 
$D_0 ( \exp^{-1}_{F^{A_{T}}(x_{T})}
\circ  F^{A_{T}}\circ 
\exp_{x_{T}}^{\times R} )$ and are linearly independent.
Therefore, the rank of $F^{A_T}$ at the point $x_T$ is $\geq T$.

In the remainder of this proof, we provide the missing link: We show that, for every Pauli string $P$, we can construct a \new{backwards-light-cone--containing block} that implements a Clifford unitary 
$C = C^{(L)} C^{(L-1)} \ldots C^{(1)}$ such that 
$C P C^{\dagger}=Z_t$.
We drop subscripts because subscripts index \new{blocks} and
this prescription underlies all \new{blocks}.
By definition, \new{each block contains} 
a qubit $t$ to which each other qubit $t'$ connects
via gates in the \new{block}.
The path from a given qubit $t'$ depends on $t'$,
and multiple paths may connect a $t'$ to $t$.
Also, one path may connect $t$ to multiple qubits.
We choose an arbitrary complete set of paths (which connect all the other qubits to $t$) that satisfies the merging property described below. To introduce the merging property, we denote by $m$ the number of paths in the set.
Let $p\in [1,m]$ index the paths.
Path $p$ contacts the qubits in the order
$i_{p,1} \mapsto i_{p,2} \mapsto \ldots \mapsto i_{p,l_p}=t$,
reaching $l_p \in [1, \, L+1]$ qubits.
We choose the paths such that they merge whenever they cross: 
If $i_{p,j}=i_{p',j'}$, then $i_{p,j+k}=i_{p',j'+k}$ for all 
$k \in \{1, 2, \ldots, l_p-j{=}l_{p'}-j' \}$.
We choose for all the gates outside these paths to be identities.
Next, we choose the nontrivial gates in terms of 
an arbitrary Pauli string.

Let $P = \bigotimes_{j=1}^n P_j$ denote an arbitrary nontrivial $n$-qubit Pauli string. 
Some Clifford unitary $C$ maps $P$ to 
a Pauli string that acts nontrivially on just one qubit
(see Refs.~\cite{cleve_near-linear_2015,Webb3Design,PhysRevA.96.062336} and~\cite{footnoteclifford}),
which we choose to be $t$.
We arbitrarily choose for the string's nontrivial single-qubit Pauli operator to be $Z$.
Let $i_{p,k_p}$ denote the first index $j$ in
$i_{p,1} \mapsto i_{p,2} \mapsto \ldots \mapsto i_{p,l_p}$ for which 
$P_j \neq \mathbb{1}$.
By definition, $P_{i_{p,k}}\otimes P_{i_{p,k+1}}$ 
is a nontrivial Pauli string.
There exists a two-local Clifford gate $C^{p,(0)}$ 
that transforms $P_{i_{p,k}}\otimes P_{i_{p,k+1}}$ into
a $Z$ acting on qubit $i_{p,k+1}$:
\begin{equation}
C^{p,(0)} (P_{p,i_k}\otimes P_{p,i_{k+1}}) (C^{p,(0)})^{\dagger}
= \mathbb{1}_{i_k} \otimes Z_{i_{k+1}}.
\end{equation}
Operating with $C^{p,(0)}$ (padded with $\mathbb{1}$'s) 
on the whole string $P$ 
yields another Pauli string:
\begin{equation}
C^{p,(0)} P (C^{p,(0)})^{\dagger}
=\bigotimes_{j=1}^n P^{p,(1)}_j.
\end{equation}
Let $i_{p,\ell}$ denote the first index $j$ for which
$P^{p,(1)}_{j}$ is a nontrivial Pauli operator.
Since 
\begin{equation}
P^{(1)}_{i_{p,k+1}} \otimes 
P^{(1)}_{i_{p,k+2}} = Z_{i_{p,k+1}} \otimes P_{i_{p,k+2}},
\end{equation}
$i_{p,\ell} = i_{p,k+1}$.
There exists a two-local Clifford gate $C^{p,(1)}$ that
shifts the $Z$ down the path:
\begin{equation}
C^{p,(1)} \left( 
P^{p,(1)}_{i_{p,k+1}} \otimes P^{p,(1)}_{i_{p,k+2}} \right) 
(C^{p,(1)})^{\dagger}
= \mathbb{1}_1 \otimes Z_{i_{p,k+2}}.
\end{equation}
We perform this process---of
shifting the $Z$ down the path and 
leaving an $\mathbb{1}_1$ behind---for every path simultaneously.
For example, if we begin with two equal-length paths, 
$C^{(2)} = C^{p,(2)} C^{p',(2)}$. 
This simultaneity is achievable until two paths merge.
Whenever paths merge, we choose the next Clifford gate such that we proceed along the merged path.
Every qubit is visited, and every path ends at qubit $t$.
Therefore, we have constructed a circuit that implements
a Clifford operation $C$ such that $C PC^{\dagger}= Z_t$.
Figure~\ref{fig:PerturbationCliffordCircuits}(d) depicts an example of this construction.
\end{proof}

The foregoing proof has a surprising implication:
A map's rank is somewhat divorced from a circuit's complexity.
The rank of $F^{A_T}$ at $x_T$ is at least $T$, which could be a large number.
Yet,  the contracted unitary corresponding to this circuit is Clifford.
Hence the extended circuit's complexity surpassed the original circuit's complexity only a little---by, at most,
$O \LParen n^2/\log(n) \RParen$~\cite{aaronson2004improved}.

Finally, we combine Lemmata~\ref{lemma:locus} and~\ref{lemma:pointofhighrank} to prove Theorem~\ref{theorem:linear growth}:
\begin{figure}
	\begin{center}
	\includegraphics[width=0.4\columnwidth]{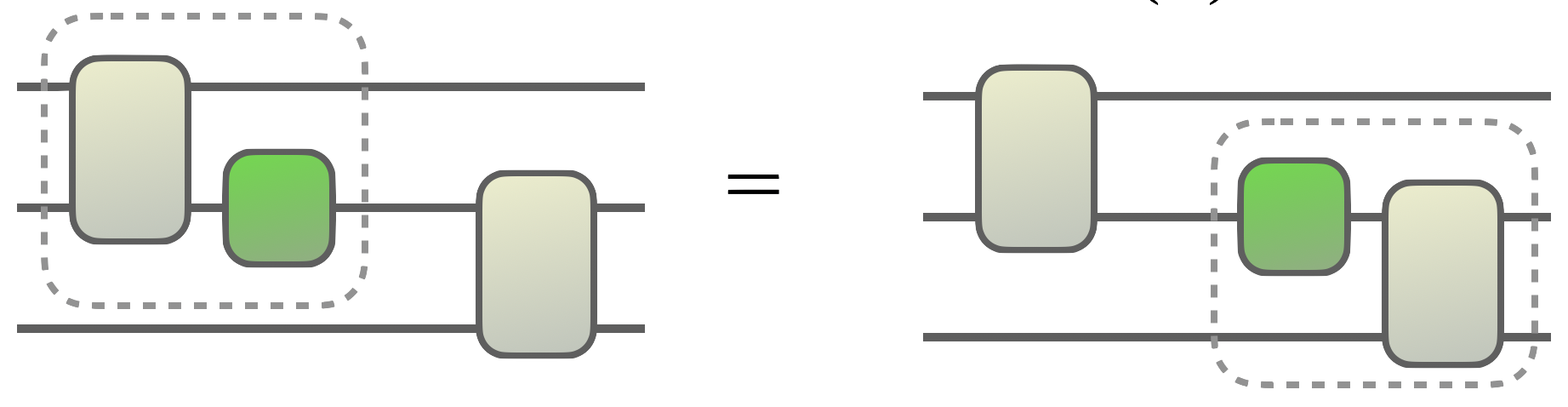}
	\end{center}
\caption{Along every contraction is a redundant copy of the gauge group $\mathrm{SU}(2)$.
}
\label{figure_gauge}
\end{figure}

\lineargrowth*

\begin{proof}[Proof of Theorem~\ref{theorem:linear growth}]
We reuse the notation introduced in Lemmata~\ref{lemma:locus} and~\ref{lemma:pointofhighrank}.
Examples include $A_T$, an arbitrary architecture
that satisfies the assumptions in Lemma~\ref{lemma:pointofhighrank} 
and that consists of $R {\leq} TL$ gates.
$F^{A_T}$ denotes the corresponding contraction map.
$E_{r_{\rm max}}$ denotes the locus of points at which
$F^{A_T}$ achieves its greatest rank, $r_{\mathrm{max}}$.
In a small open neighborhood $V$ of a point 
$x\in E_{ r_{\mathrm{max}} }$, 
the contraction map's rank is constant, by
Lemma~\ref{lemma:locus}.
By 
the constant-rank theorem~\cite[Thm~5.13]{lee2013smooth}, therefore,
$F^{A_T}$ acts locally as a projector throughout $E_{r_{\mathrm{max}}}$---and
so throughout $\mathrm{SU}(4)^{\times R}$
(except on a region of measure 0,
by Lemma~\ref{lemma:locus}).
The projector has a rank, like $F^{A_T}$, of $r_{\mathrm{max}}$.
Therefore, in the open set 
$V \subseteq \mathrm{SU}(4)^{\times R}$, 
$F^{A_T}$ is equivalent, up to a diffeomorphism, to the projection
\begin{equation}
	\label{eq_Project_Equiv}
	\left( x_1,\dots ,x_{\dim \LParen \mathrm{SU}(4)^{\times R} \RParen} \right)
	\mapsto (x_1,\dots ,x_{r_{\rm max}},
	\underbrace{0, \ldots, 0}_{
	            \dim \LParen \mathrm{SU}(2^n) \RParen - r_{\mathrm{max}} } ).
\end{equation}
For simplicity of notation, we identify $V$ with its image under the local diffeomorphism 
(we do not distinguish $V$ from its image notationally).

The open subset $V$ contains, itself, an open subset
that decomposes as a product:
$V_1\times V_2\subseteq V$, such that
$x\in V_1\times V_2$ and,
as suggested by Eq.~\eqref{eq_Project_Equiv},
\begin{align}
   \label{eq_V_Dims}
   V_1\subseteq \rr^{r_{\rm max}},
   \quad \text{and} \quad
   V_2 \subseteq \rr^{\dim \LParen \mathrm{SU}(4)^{\times R} \RParen
- r_{\rm max} } .
\end{align}
(Again to simplify notation, we are equating the local sets $V_{j=1,2}$ with their images, under local charts, in $\rr^m$, 
for $m \in \mathbb{Z}_{>0}$.)
From now on, $V_1 \times V_2$ is the open subset of interest.
The contraction map's equivalence to a projector,
in $V_1 \times V_2$,
will help us compare high-depth circuits with low-depth circuits:
Consider a circuit whose contraction map
takes some local neighborhood to an image of some dimension.
How does the dimension differ between
high-depth circuits and low-depth circuits?
We start by upper-bounding the dimension for low-depth circuits.

We have been discussing an $R$-gate architecture $A_T$.
Consider any smaller architecture $A'$ of $R' < R$ gates.
$A'$ is encoded in a contraction map $F^{A'}$ 
whose domain is $\mathrm{SU}(4)^{\times R'}$.
As explained in the proof of Lemma~\ref{lemma:locus}, 
$F^{A'}$ is a polynomial map. 
Therefore, $F^{A'}$ has a property prescribed by
the Tarski-Seidenberg principle~\cite{bochnak2013real} (Theorem~\ref{theorem:Tarski}):
The image 
$F^{A'} \LParen \mathrm{SU}(4)^{\times R'}  \RParen$ 
is a semialgebraic set of dimension 
$\leq \dim \LParen \mathrm{SU}(4)^{\times R'} \RParen
= R'\dim \LParen {\rm SU}(4) \RParen = 15R'$.
	
We can strengthen this bound:
Consider contracting two gates that share a qubit.
The shared qubit may undergo a one-qubit gate
specified by three parameters (one parameter per one-qubit Pauli).
The one-qubit gate can serve  as 
part of the first two-qubit gate or 
as part of the second two-qubit gate;
which does not affect the contraction.
Hence the contraction contains 
3 fewer parameters than expected \footnote{
		In other words, the contraction has a redundant copy of the gauge group SU$(2)$:
		Every unitary $U\in$ SU$(4)$ decomposes as 
		$(U_1\otimes U_2) K(V_1\otimes V_2)$, wherein 
		$K=e^{\mathrm{i}(aZ\otimes Z+bY\otimes  Y+cX\otimes X)}$ 
		and the $U_j$ and the $V_j$ denote single-qubit unitaries~\cite{khaneja2000cartan}.}.
Let us classify the shared qubit as 
an input of the second two-qubit gate.
A two-qubit gate in a circuit's bulk accepts two input qubits
outputted by earlier gates.
So we might expect an $R'$-gate circuit to have
$\dim \LParen F^{A'}(\mathrm{SU}(4)^{\times R'}) \RParen
\leq 15R' - 2 \times 3 R' = 9 R'$.
But the first $n/2$ gates
[the leftmost vertical line of gates in Fig.~\ref{figure:brickwork}(a)]
receive their input qubits from no earlier gates.
So we must restore $3 \times 2$ parameters for each of the $n/2$ initial gates, or restore $3n$ parameters total \footnote{
	Technically, this bound on the dimension does not follow from Lemma~2 
	as it does for the bound 
	$\dim F^{A'}(\mathrm{SU}(4)^{\times R'})\leq 15 R'$. 
	The reason is, the quotient space 
	$\mathrm{SU}(4)^{\times R'}/SU(2)^{\times (2R'-n)}$ is not necessarily semialgebraic.
	This difficulty can be resolved via Sard's theorem~\cite{sard1965hausdorff}, which asserts, as a special case, that the Hausdorff dimension of a smooth map's image is bounded by its domain's dimension. 
	A semialgebraic set's dimension is the greatest dimension in its stratification and so agrees with the Hausdorff dimension.
}:
\begin{align}
   \label{eq_Upperbound_Lowdepth}
   \dim \LParen F^{A'}(\mathrm{SU}(4)^{\times R'}) \RParen
   \leq  9R'+3n .
\end{align}
We have upper-bounded the dimension for low-depth circuits.

We now lower-bound the corresponding dimension
for high-depth circuits.
We can do so by lower-bounding the greatest possible rank, $r_{\mathrm{max}}$, 
of a high-depth architecture's contraction map, $F^{A_T}$:
In an open neighborhood of $x \in \mathrm{SU}(4)^{\times R}$,
$F^{A_T}$ is equivalent to a projector, which has some rank.
The neighborhood's image, under the projector,
is a manifold. The manifold's dimension equals the projector's rank.
Therefore, we bound the rank to bound the dimension.

Augmenting an architecture with $T$ $(\leq L)$-gate \new{blocks}
increases the contraction map's greatest possible rank, $r_{\mathrm{max}}$,
by $\geq T-1$.
Therefore, for an architecture-$A_T$ circuit of $R \leq TL$ gates,
we have constructed a point of rank 
$ T \geq R/L$. Therefore, 
\begin{align}
   \label{eq_RMax_Bound}
   r_{\mathrm{max}} \geq R  / L.
\end{align}

We have lower-bounded the dimension of
the image of a high-depth architecture's contraction map
[the rank in Ineq.~\eqref{eq_RMax_Bound}]
and have upper-bounded the analogous dimension for 
a low-depth architecture
[Ineq.~\eqref{eq_Upperbound_Lowdepth}].
The high-depth-architecture dimension
upper-bounds the low-depth-architecture dimension,
\begin{equation}
    \label{eq:comparedimensions}
	\dim \big( F^{A'} \LParen 
	\mathrm{SU}(4)^{\times R'} \RParen \big) < r_{\mathrm{max}} ,
\end{equation}
if
\begin{align}
   \label{eq_RPrime_Bound1}
   9R' + 3n < r_{\mathrm{max}} ,
\end{align} 
by Ineq.~\eqref{eq_Upperbound_Lowdepth}.
Furthermore, by Ineqs.~\eqref{eq_Upperbound_Lowdepth}
and~\eqref{eq_RMax_Bound},
Ineq.~\eqref{eq:comparedimensions} holds if
$9 R' + 3n < R /L$, or 
	$R' < \frac{R}{9L} - \frac{n}{3} \, .$
\begin{equation}
   \label{eq_gatebound2}
  R' < \frac{R}{9L} - \frac{n}{3} \, .
\end{equation}
holds. We have upper-bounded the short circuit's gate count
in terms of the deep circuit's gate count.

Let us show that, if Ineq.~\eqref{eq_gatebound2} holds,
the short circuits form 
a set of measure 0 in SU$(4)^{\times R}$ \footnote{
Technically, the \emph{sets of few gates} form 
a set of measure 0 in 
SU$(4)^{\times R}$.
Circuits are not in SU$(4)^{\times R}$, 
as explained in Fig.~\ref{fig_Contraction_map}.
However, we expected the sentence above to be more intuitive with ``short circuits'' instead of ``sets of few gates.''}.
We will begin with a point 
$x\in E_{r_{\mathrm{max}}}$;
apply the short-architecture contraction map $F^{A'}$;
and follow with the deep-architecture contraction map's inverse, $(F^{A_T})^{-1}$.
The result takes up little space in SU$(4)^{\times R}$,
we will see.

To make this argument rigorous, we recall 
the small open neighborhood $V_1 \times V_2$
of $x\in E_{r_{\mathrm{max}}}$.
In $V_1\times V_2$, 
$F^{A'} \LParen \mathrm{SU}(4)^{\times R'} \RParen$
has the preimage, under $F^{A_T}$, of
\begin{equation}
	\left( F^{ A _T}|_{V_1\times V_2} \right)^{-1} 
	\left( F^{A'} \LParen \mathrm{SU}(4)^{\times R'} \RParen
	\right)
	\simeq
	\left[ F^{A'} \LParen \mathrm{SU}(4)^{\times R'} \RParen 
	\cap V_1 \right] \times V_2. 
\end{equation}
The $\simeq$ represents our identification of the map $F^A$ with its representation in local charts.
By the proof of Lemma~\ref{lemma:locus}, 
$F^{A'} \LParen \mathrm{SU}(4)^{\times R'} \RParen$ 
is a semialgebraic set. Therefore,
by Theorem~\ref{thm_Stratification},
$F^{A'} \LParen \mathrm{SU}(4)^{\times R'} \RParen$ 
is a union of smooth manifolds. 
Each manifold is of dimension $\leq 9R'+3n$, 
by Theorem~\ref{thm_Stratification} and  Ineq.~\eqref{eq_Upperbound_Lowdepth}.
By Eq.~\eqref{eq_V_Dims}, $V_2$ is of dimension 
$\dim \LParen \mathrm{SU}(4)^{\times R} \RParen
 - r_{\mathrm{max}}$. Therefore, 
$[F^{A'} \LParen \mathrm{SU}(4)^{\times R'} \RParen \cap V_1 ]
 \times V_2$ consists of manifolds of dimension 
$\leq 9 R'+3n
+ \dim \LParen \mathrm{SU}(4)^{\times R} \RParen 
- r_{\rm max}$.
Using Ineq.~\eqref{eq_RPrime_Bound1},
we can cancel the $9 R'+3n$ with the $- r_{\rm max}$,
at the cost of loosening the bound:
$[F^{A'} \LParen \mathrm{SU}(4)^{\times R'} \RParen \cap V_1 ]
 \times V_2$ consists of manifolds of dimension
$< \dim \LParen \mathrm{SU}(4)^{\times R} \RParen$.
As a collection of manifolds of submaximal dimension, 
the unitaries implemented by short circuits satisfying~\eqref{eq_gatebound2},
restricted to a small open neighborhood $V_1$, 
form a set of measure $0$ \footnote{Again, by ``short circuits,'' we mean, ``sets of few gates.''
We replaced the latter phrase for ease of expression.}.

Let us extend this conclusion about $n$-qubit unitaries---about images of maps $F^{A'}$---to a conclusion about preimages---about lists of gates.
By Lemma~\ref{lemma:locus}, $E_{r_{\mathrm{max}}}$ is of measure $1$.
Therefore, for every $\varepsilon>0$,
there exists a compact subset $K\subseteq E_{r_{\mathrm{max}}}$ 
of measure $1-\varepsilon$.
Since $K$ is compact, for any cover of $K$ by open subsets, 
a finite subcover exists. 
The foregoing paragraph shows that, 
restricted to each open set in this finite subcover, 
the preimage of the unitaries reached by lower-depth circuits is of measure $0$.
Therefore, the preimage of the $R'$-gate, architecture-$A'$ circuits 
is of measure $\leq \varepsilon$. 
Since $\varepsilon > 0$ is arbitrary, 
the preimage is of measure $0$.
The foregoing argument holds for each architecture $A'$ of $R'$ gates. 
Hence each preimage forms a set of measure $0$. 
The total measure is subadditive.
So the union of the preimages, over all architectures with $\leq R'$ gates, is of measure $0$.
We have proven the circuit-complexity claim posited in Theorem~\ref{theorem:linear growth}. 
The state-complexity claim follows from tweaks to the proof
(Appendix~\ref{sec:state-complexity}).
\end{proof}

\section{Proof of the linear growth of state complexity} \label{sec:state-complexity}

At the end of Appendix~\ref{app_Prove_Lemma1},
we proven part of Theorem~\ref{theorem:linear growth}---that
circuit complexity grows linearly with the number of gates.
Here, we prove rest of the theorem---that
state complexity grows linearly. 
We need only tweak the proof presented in Appendix~\ref{app_Prove_Lemma1}.

Consider instead of the contraction map $F^{A_T}$, 
the map that contracts a list of gates, 
forming an architecture-$A_T$ circuit, 
and applies the circuit to $|0^n\rangle$, to get
\begin{equation}
G^{A_T} :\mathrm{SU}(4)^{\times R}
\to S^{2 \times2^n-1}
\subseteq \mathbb{C}^{2^n} .
\end{equation}
The argument works the same as in Appendix~\ref{app_Prove_Lemma1}, with one exception:
The derivative $D_x G^{A_T}$ has an image that does not contain
$4^n-1$ nontrivial linearly independent Pauli operators.
Rather, the image contains the computational basis $\{\mathrm{i}^{\kappa}|x\rangle\}_{x\in\{0,1\}^n,\kappa\in\{0,1\}}$ 
formed by applying tensor products of $Z, X$ and $Y$
to $\ket{0^n}$.
(We denote the imaginary number $\sqrt{-1}$ by $\mathrm{i}$.)
The proof of Lemma~\ref{lemma:locus} ports over without modification, as $G^{A_T}$ is a polynomial map between algebraic sets. 

The proof of Lemma~\ref{lemma:pointofhighrank} changes slightly.
We must prove the existence of a point
$x \in \mathrm{SU}(4)^{\times R}$ at which
$G^{A_T}$ has a rank at least linear in the circuit depth.
The only difference in the proof is, we must choose the operators $Q_j$ inductively such that the states 
$(C_T C_{T-1} \ldots C_j) Q_j (C_{j-1} C_{j-2} \ldots C_1)|0^n\rangle$ are linearly independent.
Such a choice is possible if $T< 2 \times 2^n-1$,
the number of real parameters in a pure $n$-qubit state vector.

\section{Randomized architectures}\label{sec_randomizedarchitectures}

From Theorem~\ref{theorem:linear growth} follows a bound on the complexity of a doubly random circuit:
Not only the gates, but also the gates' positions, are drawn randomly.
This model features in Ref.~\cite{brandao_local_2016}.
Our proof focuses on nearest-neighbor gates, but other models (such as all-to-all interactions) yield similar results.

\begin{restatable}[Randomized architectures]{cor}{randomarchitecture}\label{cor_randomizedarchitectures}
	Consider drawing an $n$-qubit unitary $U$ according to the following probability distribution:
	Choose a qubit $j$ uniformly randomly. 
	Apply a Haar-random two-qubit gate to qubits $j$ and $j+1$.
	Perform this process $R$ times.
	With high probability, the unitary implemented has a high complexity:
	For all $\alpha \in [0, 1)$,
	\begin{equation}
	\label{eq_Doubly_Random}
	\mathrm{Pr}\left(\compU(U)\geq \alpha \: \frac{R}{9n(n-1)^{2}}-\frac{n}{3}\right)\geq 1-\frac{1}{1-\alpha}(n-1)e^{-n} \, .
	\end{equation} 
\end{restatable}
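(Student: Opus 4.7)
\emph{Proof plan for Corollary~\ref{cor_randomizedarchitectures}.}
The strategy is to reduce to Theorem~\ref{theorem:linear growth} by proving that the randomly drawn architecture almost surely contains enough causal slices. First, I set $L=n(n-1)^{2}$ and partition the $R$ drawn gates, in their time order, into $T=R/L$ consecutive blocks of $L$ gates each. The key geometric observation is that a block is a causal slice whenever its gates contain, as a time-ordered subsequence, a staircase---meaning a gate on edge $\{i,i+1\}$ for each $i=1,2,\dots,n-1$, appearing in that order---since such a subsequence places qubit $n$ in the backwards light cone of every other qubit inside the block.

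Second, I estimate the probability that a single block contains such an embedded staircase by subdividing the block further into $n-1$ sub-windows of $n(n-1)$ gates each and requiring that sub-window $i$ contain at least one gate on edge $\{i,i+1\}$; the first such gate chosen from each sub-window then produces the desired staircase in the correct order. Since each gate independently lands on a prescribed edge with probability $1/(n-1)$, the probability that a fixed sub-window misses its edge is $(1-1/(n-1))^{n(n-1)}\leq e^{-n}$, and a union bound over the $n-1$ sub-windows gives $\Pr(\text{block is not a causal slice})\leq (n-1)e^{-n}$. Letting $X$ count the failed blocks, $\mathbb{E}[X]\leq T(n-1)e^{-n}$, and Markov's inequality yields $\Pr\LParen X\geq (1-\alpha)T\RParen \leq (n-1)e^{-n}/(1-\alpha)$, so with probability at least $1-(n-1)e^{-n}/(1-\alpha)$ the drawn architecture $A$ contains at least $\alpha T$ disjoint causal slices of size $\leq L$ each.

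Third, conditioned on this event, I let $A''$ be the sub-architecture of $A$ obtained by fixing to the identity every gate lying outside the $\alpha T$ identified causal slices. Since $\mathbb{1}_{2}\in \mathrm{SU}(4)$, one has $\mathcal{U}(A'')\subseteq \mathcal{U}(A)$ and therefore $d_{A}\geq d_{A''}\geq \alpha T$, by applying Proposition~\ref{thm:lower-bound-on-dA} to $A''$. Feeding $d_{A}\geq \alpha T$ into the measure-zero argument of Theorem~\ref{theorem:linear growth}---combined with the universal upper bound $d_{A'}\leq 9R'+3n$ for any $R'$-gate architecture $A'$---yields $\compU(U)\geq \alpha R/(9n(n-1)^{2})-n/3$ on the event, giving the stated inequality. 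The step I expect to require the most care is establishing $d_A\geq \alpha T$ when the identified causal slices are interspersed with non-slice gates of $A$: the monotonicity $\mathcal{U}(A'')\subseteq \mathcal{U}(A)$, resting on the presence of the identity in $\mathrm{SU}(4)$, is what allows Proposition~\ref{thm:lower-bound-on-dA} to be invoked despite the intervening gates.
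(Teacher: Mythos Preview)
Your proposal is correct and follows essentially the same route as the paper's proof: partition into blocks of size $L=n(n-1)^2$, subdivide each block into $n-1$ sub-windows of $n(n-1)$ gates and require sub-window $i$ to hit edge $\{i,i+1\}$ so as to embed a staircase, bound the single-block failure probability by $(n-1)e^{-n}$, apply Markov's inequality with threshold $(1-\alpha)T$, and then observe that gates outside the identified causal slices can only enlarge $\mathcal{U}(A)$ and hence cannot decrease $d_A$. The only cosmetic differences are that the paper computes the block success probability exactly via independence and then applies Bernoulli-type inequalities (rather than your union bound), and phrases the monotonicity step as ``extra gates can only increase the contraction map's image'' rather than your explicit restriction to identity gates; both arrive at the same bound.
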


\begin{proof}
	The proof relies on the following strategy:
	We consider constructing \new{blocks} randomly to form a circuit.
	If the \new{blocks} contain enough gates, we show,
	many of the \new{blocks contain backwards light cones}.
	This result enables us to apply Theorem~\ref{theorem:linear growth} to bound the circuit's complexity.
	
	Consider drawing $L$ gates' positions uniformly randomly.
	For each gate, the probability of drawing position $(j,j+1)$ is $1/(n-1)$. 
	The probability that no gates act at position $(j, j+1)$ is $(1-1/(n-1))^{L}$.
	Let us choose for each \new{block} to contain $L=n(n-1)^2$ gates. 
	Define a binary random variable $I_j$ as follows: 
	If one of the gates drawn during steps $(j-1)n(n-1),(j-1)n(n-1)+1,\ldots, jn(n-1)$ acts at $(j,j+1)$, then $I_j=1$. 
	Otherwise, $I_j=0$.
	With high probability, gates act at all positions:
	\begin{equation}
	p:=\mathrm{Pr}\left(\bigwedge_{j=1}^{n-1}(I_j=1)\right)=\left(1-\left(1-\frac{1}{n-1}\right)^{n(n-1)}\right)^{n-1}\geq \left(1-e^{-n}\right)^{n-1}\geq 1-(n-1)e^{-n}.
	\end{equation}
	We have invoked the inverse Bernoulli inequality and the Bernoulli inequality.
	We will use this inequality to characterize \new{blocks that contain backwards light cones}.
	
	Consider drawing $T$ $L$-gate \new{blocks} randomly,
	as described in the corollary.
	Denote by $X$ the number of \new{blocks} in which
	at least one position is bereft of gates:
	For some $j$, $I_j=0$.
	With high probability, $X$ is small:
	For all $a \in (0, T]$,
	\begin{equation}
	\mathrm{Pr}(X\geq a)
	\leq \frac{T(1-p)}{a}
	\leq T(n-1) \, e^{-L/n}/a \, ,
	\end{equation}
	by Markov's inequality.
	Let us choose for the threshold to be $a = (1 - \alpha)T$.
	With overwhelming probability, 
	$\alpha T$ \new{blocks} satisfy $\bigwedge_j (I_j=1)$ 
	and so contain gates that act at all positions $(j,j+1)$ in increasing order.
	Therefore, these \new{blocks} contain a staircase architecture and \new{so contain backwards light cones}.
	Therefore, a slight variation on Theorem~\ref{theorem:linear growth} governs the 
	$\alpha T \times L = \alpha R$ gates that form the \new{blocks}.
	Strictly speaking, Theorem~\ref{theorem:linear growth} governs only consecutive \new{backwards-light-cone--containing blocks}. In contrast, extra gates may separate the \new{blocks} here. However, the extra gates can only increase the contraction map's image. 
	Therefore, the additional $(1 - \alpha) TL$ gates cannot decrease the accessible dimension $d_{A_T}$. 
	Therefore, the bound from Theorem~\ref{theorem:linear growth} holds. With probability $\geq 1-\frac{1}{1-\alpha}(n-1) \, e^{-n}$ over the choice of architecture,
	\begin{equation}
	\compU(U)\geq \frac{R- (1 - \alpha) R}{9n^2(n-1)}-\frac{n}{3},
	\end{equation}
	with probability one over the choice of gates.
	This bound is equivalent to Ineq.~\eqref{eq_Doubly_Random}.
\end{proof}

\section{Proof of Corollary~\ref{cor:robustversion}}
\label{sec:proof_corollary}

Corollary~\ref{cor:robustversion} extends 
Theorem~\ref{theorem:linear growth}
to accommodate errors in the target unitary's implementation.
We prove Corollary~\ref{cor:robustversion} by 
drawing on the proof of Theorem~\ref{theorem:linear growth}
and reusing notation therein.

\begin{restatable}[Slightly robust circuit complexity]{cor}{robust} \label{cor:robustversion}
	Let $U$ denote the $n$-qubit unitary implemented by any random quantum circuit in any architecture $A_T$ that satisfies the assumptions in Theorem~\ref{theorem:linear growth}.
	Let $U'$ denote the $n$-qubit unitary implemented by any circuit of $R'\leq R/(9L)-n/3$ gates.
	For every $\delta \in (0, 1]$, there exists an $\varepsilon:=\varepsilon(A_T,\delta)>0$ 
	such that the Frobenius distance
	$d_{\rm F} (U, U') \geq \varepsilon$,
	with probability $1-\delta$, unless $R/L> 4^n-1$.
\end{restatable}
\begin{proof}[Proof of Corollary~\ref{cor:robustversion}]

The proof of Theorem~\ref{theorem:linear growth}
can be modified to show that, for every $\delta>0$, 
there exists an open set 
$B \subseteq \mathrm{SU} (2^n)$ that contains
$F^{A'} \LParen \mathrm{SU}(4)^{\times R'} \RParen$, 
such that the preimage
$(F^{A_T})^{-1}(B)$ is small---of measure $\leq \delta$.
The modification is as follows.
For every $\delta' > 0$, there exists 
a measure-$(1 - \delta')$ compact subset $K$ of $E_{r_{\mathrm{max}}}$.
As $K$ is compact, there exists a finite cover of $K$ 
that has the following properties:
$K$ is in the union $\cup_j V^j$ of subsets $V^j$.
On the $V^j$, the contraction map $F^{A_T}$ is equivalent to a projector, up to a local diffeomorphism.
As in the proof of Theorem~\ref{theorem:linear growth},
we can assume, without loss of generality, that
$V^j = V^j_1 \times V^j_2$.
The $V^j_1$ and $V^j_2$ are defined analogously to the $V_1$ and $V_2$ in the proof of Theorem~\ref{theorem:linear growth}.
For each $V^j$, there exists an open neighborhood $W^j$
of $F^{A'}(\mathrm{SU}(4)^{\times R'})\cap V^j_1$ 
such that $W^j$ has an arbitrarily small measure $\delta_j''>0$.
Therefore, $B:=\cup_j W^j$ has a preimage of measure
$\leq \delta'+\sum_j \delta''_j = \delta$.
Each of the summands, though positive, can be arbitrarily small.

The Frobenius norm induces a metric $d_{\mathrm{F}}$ on 
$\mathrm{SU}(4)^{\times R}$.
In terms of $d_{\mathrm{F}}$, we define the function 
\begin{equation}
	d_{\mathrm{F}} \left( \: . \: , \, 
	F^{A_T} \LParen \mathrm{SU}(4)^{\times R} \RParen 
	\setminus B \right):
	F^{A'} \LParen \mathrm{SU}(4)^{\times R'} \RParen
	\to \rr_{\geq 0}.
\end{equation}
This function is continuous, and 
$F^{A'} \LParen \mathrm{SU}(4)^{\times R'} \RParen$ is compact.
Therefore, the function achieves its infimum at a point $x_{\mathrm{min}}\in F^{A'} \LParen \mathrm{SU}(4)^{\times R'} \RParen$. 
Therefore, the minimal distance to
$F^{A_T} \LParen \mathrm{SU}(4)^{\times R} \RParen 
 \setminus B$ is
$d_{\mathrm{F}}
\LParen x_{\mathrm{min}}, F(\mathrm{SU}(4)^{\times R})\setminus B \RParen$.
Since $B$ is open, $F^A(\mathrm{SU}(4)^{\times R})\setminus B$ 
is closed and so compact.
By the same argument,
\begin{equation}
    \label{eq_vareps}
	\varepsilon (A_T,\delta) \coloneqq 
	d_{\mathrm{F}} \left( x_{\mathrm{min}}, 
	F \LParen \mathrm{SU}(4)^{\times R} \RParen 
	\setminus B \right)
	= \inf_{y\in 
	  F \LParen \mathrm{SU}(4)^{\times R} \RParen
	  \setminus B} 
	  \left\{ d_{\mathrm{F}} (x_{\mathrm{min}},y) \right\}
	=d_{\mathrm{F}} (x_{\mathrm{min}},y_{\mathrm{min}})>0.
\end{equation}
We have identified an $\varepsilon>0$ that satisfies Corollary~\ref{cor:robustversion}. 
\end{proof}

\section{Notions of circuit complexity}\label{sec:Nielsen}
As circuit complexity is a widely popular concept, there is a zoo of quantities that measure it.
We prove our main theorem for the straightforward definition of exact circuit implementation---the clearest and historically first notion of a circuit complexity---and for a version of approximate circuit complexity (Corollary~\ref{cor:robustversion}) with an uncontrollably small error. In this appendix, we briefly mention other notions of complexity, partially to
review other notions and partially to place the main text's findings in a wider context. Let $U\in$ SU$(2^n)$ denote a unitary. Ref.~\cite{nielsen2006quantum} discusses
notions of approximate circuit complexity.

\begin{definition}[Approximate circuit complexity]
The approximate circuit complexity $\compU(U,\eta)$ is the least number of $2$-local gates, arranged in any architecture, that implements $U$ up to
an error $\eta>0$ in operator norm $||.||$.
\end{definition}
\noindent
This definition is similar in mindset to the above (slightly) robust definition of a circuit complexity.
For every pair $U, U' \in$ SU$(2^n)$ of circuits, 
the Frobenius distance between them satisfies
\begin{equation}
	\frac{1}{2^n} d_{\rm F}(U, U')
	\leq || U-U'|| 
	\leq d_{\rm F}(U, U') .
\end{equation}
A widely used proxy for quantum circuit 
complexity---one that is increasingly seen
as a complexity measure in its own right---is Nielsen's geometric approach to circuit and
state complexity \cite{nielsen2005geometric,nielsen2006quantum,dowling2008geometry}.
This approach applies geometric reasoning to circuit complexity and led to many intuitive insights, including Brown and Susskind's conjectures about the circuit complexity's behavior under random evolution.
To connect to cost functions as considered in Nielsen's framework, 
consider \new{$1$-local} and $2$-local Hamiltonian terms $H_1, H_2, \dots, H_m$ 
in the Lie algebra $\text{su}(2^n)$ of traceless Hermitian matrices, normalized
as $\|H_j\|=1$ for $j=1,2,\dots, m$. 
Consider generating a given unitary, 
by means of a control system,
following Schr{\"o}dinger's equation:
\begin{equation}
	\frac{d}{dt}U(t) = -i H(t) U(t) ,
	\; \text{wherein} \;
	H(t) = \sum_{j=1}^m h_j (t) H_j .
\end{equation}
The control function $[0,\tau ]\rightarrow \rr^m$ is defined as
$t\mapsto \LParen h_1(t), \dots, h_m(t) \RParen$ and satisfies $U(0)=\mathbb{1}$. That is, a quantum circuit results from time-dependent control. In practice, not all of
$\rr^m$ reflects meaningful control parameters; merely
a control region $\mathcal{R} \subset \rr^m$ does. 
With each parameterized curve is associated a cost function
$c : \mathcal{R} \rightarrow \rr$, so that the entire cost of a unitary $U\in $SU$(2^n)$ becomes
\begin{equation}
C(U) := \inf_{T, \, t\mapsto H(t)}   
\int_{0}^\tau dt \; c \LParen H(t) \RParen .
\end{equation}
We take the infimum over all time intervals $[0, \tau ]$ and over all control functions $t\mapsto H(t)$
such that the control parameters are in $\mathcal{R}$ for all $t\in [0,\tau]$
and scuh that $U(\tau)=U$. 
Several cost functions are meaningful and have been discussed in the literature. A common choice is
\begin{equation}
c_p \LParen H(t) \RParen = \left(\sum_{j=1}^m h_j(t)^p\right)^{1/p}.
\end{equation}
In particular, $c_2$ gives rise to a sub-Riemannian metric. For the resulting cost $C_2(U)$, Ref.\ \cite{Nielsen_06_Optimal} establishes
a connection between the approximate circuit complexity and the cost:
Any bound on the approximate circuit complexity, with an approximation error bounded from below independently of the system size, immediately implies a lower bound on the cost.

\begin{theorem}[Approximate circuit complexity and cost \cite{Nielsen_06_Optimal}] For every integer $n$,
every $U\in \mathrm{SU}(2^n)$ and every $\eta>0$, 
\begin{equation}
\compU(U,\eta)\leq c \, \frac{C_2(U)^3  n^6}{\eta^2} \, .
\end{equation}
\end{theorem}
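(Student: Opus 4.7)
The plan is to convert a near-optimal Nielsen control trajectory for $U$ into a product of 2-local gates via a Trotter-style discretization, and then balance the discretization error against the gate count. By definition of $C_2(U)$, for every $\delta>0$ there is a control time $\tau$ and a control function $t\mapsto H(t)=\sum_{j=1}^m h_j(t) H_j$ (with $\|H_j\|=1$) satisfying $U(\tau)=U$ and $\int_0^\tau \|h(t)\|_2\,dt\leq C_2(U)+\delta$. A standard geodesic reparameterization (rescaling time by cumulative arc length) lets me assume without loss of generality that $\|h(t)\|_2\equiv 1$ and $\tau=C_2(U)+\delta$. The set of 2-local Hermitian generators has size $m=\binom{n}{2}\cdot 15=O(n^2)$, and Cauchy--Schwarz gives $\|H(t)\|\leq\sum_j|h_j(t)|\leq\sqrt{m}$.

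Next, I partition $[0,\tau]$ into $N$ intervals $I_i=[t_i,t_{i+1}]$ of equal length $\Delta t=\tau/N$, and on each $I_i$ replace the true time-ordered evolution by the first-order Trotter product
\begin{equation}
\tilde U_i=\prod_{j=1}^m \exp\bigl(-i\alpha_{i,j}H_j\bigr),\qquad \alpha_{i,j}=\int_{t_i}^{t_{i+1}}h_j(s)\,ds.
\end{equation}
Each factor is a 2-local unitary, so the approximating circuit $\tilde U:=\tilde U_{N-1}\cdots\tilde U_0$ uses $Nm$ 2-local gates.

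The error analysis has two ingredients. (i) The Magnus/ordering error $\|\mathcal T\exp(-i\int_{I_i}H)-\exp(-i\int_{I_i}H)\|$ is $O(\Delta t^2\max_{s\in I_i}\|H(s)\|^2)=O(\Delta t^2\,m)$. (ii) The Trotter error $\|\exp(-i\sum_j\alpha_{i,j}H_j)-\prod_j\exp(-i\alpha_{i,j}H_j)\|$ is $O\bigl(\sum_{j<k}|\alpha_{i,j}\alpha_{i,k}|\cdot\|[H_j,H_k]\|\bigr)=O(\Delta t^2\,m)$, using $\sum_j|\alpha_{i,j}|\leq\Delta t\sqrt{m}$. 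Telescoping via $\|\prod A_i-\prod B_i\|\leq\sum_i\|A_i-B_i\|$ for unitaries gives total error $\|U-\tilde U\|=O(N\cdot\Delta t^2\cdot m)=O(\tau\,\Delta t\,m)=O(C_2(U)\,\Delta t\,n^2)$. Demanding this be $\leq\eta$ forces $\Delta t\sim\eta/(C_2(U)\,n^2)$, so $N\sim C_2(U)^2\,n^2/\eta$ and the gate count is $Nm\sim C_2(U)^2\,n^4/\eta$. For the regime of interest, namely $C_2(U),n\geq 1$ and $\eta\leq 1$, this is majorized by $c\,C_2(U)^3\,n^6/\eta^2$; for $\eta$ above a universal constant the bound is trivial (e.g.\ via the universal exponential-depth decomposition absorbed into $c$), completing the claim.

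The main obstacle is to keep the two distinct error sources (time-ordering and Trotter) simultaneously under control when $H(t)$ has no simplifying structure such as piecewise constancy: one has to bound both errors per slice by the same $\Delta t^2$ prefactor while using only the information that $\|h(t)\|_2$ is integrable along the trajectory. A technical subtlety is ensuring the reparameterization to unit speed is compatible with any control-region constraint $\mathcal R\subset\mathbb R^m$ implicit in the definition of $C_2(U)$; this is handled by shrinking $\delta$ and noting that the Trotter product uses only the integrated weights $\alpha_{i,j}$, which are insensitive to the instantaneous shape of $h(t)$.
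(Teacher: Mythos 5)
The paper does not actually prove this theorem: it is imported verbatim from Ref.~\cite{Nielsen_06_Optimal}, whose argument runs through approximating minimal geodesics of a penalized metric, projecting onto two-local directions, and compiling the resulting segments---that is where the specific $C_2(U)^3 n^6/\eta^2$ form originates. Your proposal is therefore a genuinely independent route, and its core is sound: discretizing a near-minimal unit-speed control path into $N$ slices, replacing each slice by the first-order product $\prod_j e^{-i\alpha_{i,j}H_j}$ of two-local gates, and bounding per-slice Magnus and Trotter errors by $O(\Delta t^2 m)$ with $m=O(n^2)$ is correct (the bounds $\sum_j\lvert\alpha_{i,j}\rvert\le\sqrt{m}\,\Delta t$, $\|e^{\sum_j A_j}-\prod_j e^{A_j}\|\le\tfrac12\sum_{j<k}\|[A_j,A_k]\|$, and unitary telescoping all hold). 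This yields a gate count $O\bigl(C_2(U)^2 n^4/\eta\bigr)$, which in the main regime is \emph{stronger} than the quoted $c\,C_2(U)^3 n^6/\eta^2$; so your approach buys a more elementary proof and a better exponent, at the price of not reproducing the geometric statement about geodesics that Nielsen's argument also delivers.

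The one genuine flaw is the reduction from your bound to the stated one outside the regime $\eta\le c\,C_2(U)\,n^2$. Your stated regime ``$C_2(U),n\ge 1$, $\eta\le 1$'' silently excludes unitaries with $C_2(U)<1$, and the fallback you offer for large $\eta$ (``universal exponential-depth decomposition absorbed into $c$'') does not work: that decomposition costs $\exp(\Omega(n))$ gates, which cannot be absorbed into a universal constant when the right-hand side $c\,C_2(U)^3n^6/\eta^2$ is $O(1)$. The correct patch is elementary: since $\|U-\id\|\le\int_0^{\tau}\|H(t)\|\,dt\le\sqrt{m}\,C_2(U)\le 3n\,C_2(U)$, whenever $\eta> c\,C_2(U)\,n^2$ with $c\ge 3$ the empty circuit already achieves error $\le\eta$, so $\compU(U,\eta)=0$ and the bound is trivial; in the complementary case $\eta\le c\,C_2(U)\,n^2$ your $C_2(U)^2n^4/\eta$ is dominated by $c\,C_2(U)^3n^6/\eta^2$. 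With that two-line case distinction (and the reparameterization caveat you already flag, which requires $\mathcal{R}$ to be scaling-friendly, as in the standard $c_2$ setting), the argument is complete.
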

The quantity on the right-hand side can, in turn, be upper-bounded:
$C_2(U)\leq C_1(U)$.
This $C_1$ has a simple interpretation in terms of a weighted gate complexity~\cite{Entanglement}.

\begin{definition}[Weighted circuit complexities]
Let $U\in\mathrm{ SU}(2^n)$ denote a unitary.
The weighted circuit complexity ${\cal C}_{\mathrm{w}}(U)$ equals the sum of the weights of $2$-local gates, arranged in any architecture, that implement $U$, wherein each gate $U_j$ is weighted by its strength $W(U_j)$, defined through
\begin{equation}
W(U) :=\inf \left\{ \|h\|: U=e^{ih}\right\}.
\end{equation}
\end{definition}
\noindent
The weighted circuit complexity ${\cal C}_{\mathrm{w}}(U)$ turns out to equal the cost $C_1 (U)$ for \new{any} given unitary. We can grasp this result by Trotter-approximating the time-dependent parameterized
curve in the definition of $C_1 (U)$. 
\begin{lemma}[Weighted circuit complexity and cost] 
If $n$ denotes an integer and $U\in \mathrm{SU}(2^n)$, then
\begin{equation}
{\cal C}_{\mathrm{w}}(U) = C_1(U).
\end{equation}
\end{lemma}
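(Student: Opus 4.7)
The plan is to prove the two inequalities separately, viewing both $\mathcal{C}_{\mathrm{w}}$ and $C_1$ as right-invariant Finsler-type distances on $\mathrm{SU}(2^n)$ built from the same $2$-local generating set.

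For $C_1(U)\leq \mathcal{C}_{\mathrm{w}}(U)$, I would start from any architecture $U = U_r U_{r-1}\cdots U_1$ of $2$-local unitaries with $U_j=\mathrm{e}^{\mathrm{i}h_j}$ and $\|h_j\|$ arbitrarily close to $W(U_j)$. The idea is to assemble a piecewise-constant control Hamiltonian: partition $[0,r]$ into unit-length intervals and put $H(t) = -h_j$ on the $j$-th interval. The time-ordered Schr\"odinger evolution then produces $U(r)=U$ exactly, and the cost contribution on the $j$-th interval equals $c_1(-h_j) = \|h_j\|$, since a single $2$-local generator of operator norm $\alpha$ can be written as $\alpha$ times a unit-norm element of the control basis, which is optimal for $c_1$. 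Summing over $j$ and taking the infimum over architectures and over representatives $h_j$ yields the desired bound.

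The reverse direction $\mathcal{C}_{\mathrm{w}}(U)\leq C_1(U)$ uses a Trotterization of any near-optimal control path $t\mapsto H(t)=\sum_j h_j(t)H_j$ whose terminal value is $U$ and whose cost is within $\varepsilon$ of $C_1(U)$. Partitioning $[0,\tau]$ into $N$ subintervals of length $\delta t$, I approximate each Schr\"odinger step $\mathrm{e}^{-\mathrm{i}H(t_k)\delta t}$ by the ordered product $\prod_j \mathrm{e}^{-\mathrm{i} h_j(t_k) H_j \delta t}$ of elementary $2$-local gates, each of strength $|h_j(t_k)|\delta t\cdot\|H_j\| = |h_j(t_k)|\delta t$. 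The total summed weight is a Riemann sum that converges, as $N\to\infty$, to $\int_0^\tau c_1(H(t))\,\mathrm{d}t\leq C_1(U)+\varepsilon$, while the Lie--Trotter--Suzuki error between the product and $U$ shrinks to zero. The output is therefore an arbitrarily good approximation of $U$ whose total weight is $C_1(U)+\varepsilon+o(1)$.

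The main obstacle is to upgrade the above into an \emph{exact} implementation of $U$ while preserving the cost bound, since $\mathcal{C}_{\mathrm{w}}(U)$ is defined through exact circuits. The natural fix is to append a near-identity correction that compensates the Trotter error: its strength $W$ is controlled by the operator norm of the logarithm of that error, which scales with the Trotter error itself and hence vanishes in the limit $N\to\infty$. Using higher-order Suzuki product formulas makes this correction shrink faster than the Riemann-sum remainder, so the bound $\mathcal{C}_{\mathrm{w}}(U)\leq C_1(U)+\varepsilon$ survives the limit, and sending $\varepsilon\to 0$ completes the proof.
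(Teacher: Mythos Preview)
Your overall strategy is the one the paper itself sketches: the paper offers no proof beyond the sentence ``We can grasp this result by Trotter-approximating the time-dependent parameterized curve in the definition of $C_1(U)$,'' and your two inequalities---a piecewise-constant control path for $C_1\le\mathcal{C}_{\mathrm{w}}$ and Trotterization for $\mathcal{C}_{\mathrm{w}}\le C_1$---are precisely that idea. You in fact go further than the paper by flagging the exactness issue in the Trotter direction.

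Your proposed resolution of that issue, however, has a gap. The correction $V\coloneqq U\cdot(\text{Trotter product})^{-1}$ is an $n$-qubit unitary near $\mathbb{1}$, but it is \emph{not} $2$-local, so it is not an admissible gate in the definition of $\mathcal{C}_{\mathrm{w}}$. The quantity you bound, $W(V)=\inf\{\|h\|:V=e^{ih}\}$, is small, but it does not control $\mathcal{C}_{\mathrm{w}}(V)$: for $n>2$ the $2$-local Hermitians do not \emph{span} $\mathfrak{su}(2^n)$, they only \emph{Lie-generate} it, so $\log V$ is generically a genuinely $n$-body operator no matter how small its norm. What you actually need is $\mathcal{C}_{\mathrm{w}}(V)\to 0$ as $V\to\mathbb{1}$. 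This is true and follows from universality of $2$-local gates via a Chow--Rashevskii or ball--box type argument (unitaries near $\mathbb{1}$ can be reached by short products of $2$-local exponentials, with a H\"older rather than Lipschitz estimate in the distance to $\mathbb{1}$), but it is a separate, nontrivial input that your sketch does not supply. Without it the final limiting step is circular: bounding the weighted complexity of a near-identity unitary by its distance to $\mathbb{1}$ is essentially the inequality $\mathcal{C}_{\mathrm{w}}\le C_1$ you are trying to establish.
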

Therefore, the weighted circuit complexity 
grows like the cost $C_1$.
By implication, the circuit complexity's growth will be reflected by a notion of circuit complexity that weighs the quantum
gates according to their strengths. Again, once the main text's approximate circuit complexity 
is established with an $n$-independent approximation error, one finds bounds on the weighted circuit complexity, as well. 

The last important notion of
circuit complexity that has arisen in the recent literature is that of Ref.~\cite{brandao2019models}. 
Denote by ${\cal G}_a\subset $SU$(2^{2n})$ the set of $2n$-qubit unitary circuits comprised of $\leq a$ elementary quantum gates, wherein the first $n$ qubits form the actual system and the next $n$ qubits form a memory. Let ${\cal M}_b$ denote the class of all two-outcome measurements, defined on $2n$ qubits, that require quantum circuits whose implementation requires $\leq b$ elementary quantum gates. Define
\begin{eqnarray}
    \beta(r,U)&:=& \text{maximize }\left|{\rm tr}
    \left(M  \left\{ 
    [U\otimes \id]|\phi\rangle\langle\phi| 
    [U\otimes \id]^\dagger
    - [\id/2^n \otimes 
    {\rm tr}_1 (|\phi\rangle\langle\phi|) ]
    \right\} \right)\right|,\\
    && \text{subject to } M\in {\cal M}_b,\,
    |\phi\rangle = V|0^{2n}\rangle, \, V\in {\cal G}_a, \, r=a+b.
\end{eqnarray}
In terms of this quantity, Ref.~\cite{brandao2019models} defined strong unitary complexity.

\begin{definition}[Strong unitary complexity \cite{brandao2019models}] 
Let $r\in \rr$ and $\delta\in (0,1)$.
A unitary $U \in$ SU$(2^n)$ 
has strong unitary complexity $\leq r$ if
\begin{equation}
    \beta(r,U) \geq 1-\frac{1}{2^{2n}}-\delta,
\end{equation}
denoted by 
$\tilde {\cal C}(U,\delta)\geq r$.
\end{definition}
While seemingly technically involved, the definition is operational.
The definition is also more 
stringent and demanding than more-traditional definitions of approximate circuit complexity.
To concretize this statement, we denote the diamond norm by $\|.\|_\diamond$ \cite{Watrous1}. 
\begin{lemma}[Implications of strong unitary complexity \cite{brandao2019models}] 
Suppose that $U\in U(2^n)$ 
obeys $\tilde {\cal C}(U,\delta) 
\geq r + 1$ 
for some 
$\delta\in (0,1)$,
$r\in \rr$, arbitrary
measurement procedures that include the Bell measurement. Then
\begin{equation}
\min_{\compU(V)\leq r}
\frac{1}{2}
\|{\cal U}- {\cal V}\|_\diamond>\sqrt{\delta} \, .
\end{equation}
That is, it is impossible to accurately approximate $U$ with circuits $V$ of $< r$ elementary quantum gates.
\end{lemma}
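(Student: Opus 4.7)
The plan is to argue the contrapositive: assuming there exists a circuit $V$ with $\compU(V) \le r$ and $(1/2)\|U - V\|_\diamond \le \sqrt{\delta}$, I will exhibit a distinguishing protocol that witnesses $\tilde{\cal C}(U,\delta) \le r$, contradicting the hypothesis $\tilde{\cal C}(U,\delta) \ge r+1$.

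First I would construct the protocol directly from the approximating $V$. Let $|\Phi\rangle = 2^{-n/2}\sum_i |i\rangle|i\rangle$ denote the maximally entangled state on $2n$ qubits, and define $|\psi_W\rangle := (W \otimes \id)|\Phi\rangle$ for any $W\in \mathrm{SU}(2^n)$; note that $\operatorname{tr}_1|\psi_W\rangle\langle\psi_W| = \id/2^n$. The proposed procedure prepares $|\Phi\rangle$ on the $2n$ qubits, sends the first register through the unknown channel, and then measures the two-outcome projector $M_V := |\psi_V\rangle\langle\psi_V|$, which is implemented by applying $V^\dagger$ to the first register (cost $\compU(V)\le r$) followed by the Bell measurement (explicitly permitted by the statement's proviso on the measurement class).

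Second I would control how much advantage is lost by using $V$ in the protocol rather than $U$. Substituting $|\Phi\rangle$ into the definition of the diamond norm, the trace-distance identity for pure states yields
\begin{equation*}
\sqrt{1 - |\langle \psi_U|\psi_V\rangle|^2} = \tfrac{1}{2}\bigl\|(U\otimes \id)|\Phi\rangle\langle\Phi|(U\otimes \id)^\dagger - (V\otimes \id)|\Phi\rangle\langle\Phi|(V\otimes \id)^\dagger\bigr\|_1 \le \tfrac{1}{2}\|U-V\|_\diamond \le \sqrt{\delta},
\end{equation*}
hence $|\langle \psi_U|\psi_V\rangle|^2 \ge 1-\delta$. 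Evaluating the protocol on $(U\otimes \id)|\Phi\rangle\langle\Phi|(U\otimes\id)^\dagger$ gives acceptance probability $|\langle \psi_V|\psi_U\rangle|^2 \ge 1 -\delta$, whereas on $\id/2^n \otimes \operatorname{tr}_1|\Phi\rangle\langle\Phi| = \id/2^{2n}$ the acceptance probability is $\operatorname{tr}(M_V)/2^{2n} = 1/2^{2n}$. The distinguishing advantage is therefore at least $1 - 1/2^{2n} - \delta$, which is precisely the threshold certifying $\tilde{\cal C}(U,\delta) \le r$, contradicting the hypothesis.

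The main obstacle I foresee is the careful bookkeeping of the partition $r = a+b$ between state preparation and measurement in the definition of $\beta(r, U)$: one must verify that the overhead of preparing $|\Phi\rangle$ and executing the Bell projection fits within the resources that $\tilde{\cal C}$ grants for free (which is exactly the role of the statement's proviso "arbitrary measurement procedures that include the Bell measurement"). A secondary technicality is that the lemma's conclusion is strict ($>\sqrt{\delta}$); this is handled by observing that $\tilde{\cal C}(U,\delta) \ge r+1$ forbids \emph{any} protocol of complexity $r$ from reaching the threshold $1 - 1/2^{2n} - \delta$, so even the boundary case $(1/2)\|U-V\|_\diamond = \sqrt{\delta}$ produces a contradiction, justifying the strict inequality.
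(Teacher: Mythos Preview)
The paper does not actually prove this lemma: it is quoted verbatim from Ref.~\cite{brandao2019models}, and the paragraph that follows the lemma statement is not a proof but a further consequence. There the authors combine the (cited) lemma with the elementary estimate $\tfrac{1}{2}\lVert\mathcal U-\mathcal V\rVert_\diamond\le\lVert U-V\rVert_\infty$ to conclude that $\tilde{\mathcal C}(U,\delta)\ge r+1$ implies the approximate circuit complexity bound $\mathcal C(U,\delta)\ge r$. So there is nothing in this paper to compare your argument against.

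That said, your contrapositive is precisely the argument given in the cited reference: prepare the maximally entangled state $|\Phi\rangle$, send the first register through the channel, uncompute $V$ on that register, and perform the Bell projection; the acceptance probabilities on $(U\otimes\id)|\Phi\rangle\langle\Phi|(U\otimes\id)^\dagger$ and on $\id/2^{2n}$ are $|\langle\psi_U|\psi_V\rangle|^2$ and $2^{-2n}$ respectively, and the pure-state trace-distance identity turns the diamond-norm hypothesis into $|\langle\psi_U|\psi_V\rangle|^2\ge 1-\delta$, giving advantage $\ge 1-2^{-2n}-\delta$. Your flagged obstacle---the $a+b$ bookkeeping---is exactly the point where the ``$+1$'' slack and the explicit allowance of the Bell measurement in the statement are used in Brand\~ao et al.\ to absorb the overhead of preparing $|\Phi\rangle$ and implementing the projector $M_V$, so that the protocol fits in budget $r$. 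Your sketch is correct and matches the original source; it simply goes beyond what the present paper supplies.
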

\noindent
${\cal U}$ and ${\cal V}$ denote the unitary quantum channels defined by ${\cal U}(\rho) = 
    U\rho U^\dagger$
    and
    ${\cal V}(\rho) = 
    V\rho V^\dagger$.
The diamond norm between them is
\begin{eqnarray}
\frac{1}{2}
\|{\cal U}- {\cal V}\|_\diamond &=& 
\frac{1}{2}
\sup_\rho
\|(U\otimes \id) \rho
(U\otimes \id)^\dagger-
(V\otimes \id) \rho
(V\otimes \id)^\dagger\|_1\\
&\leq &
\frac{1}{2}
\sup_\rho
\|[(U-V)\otimes \id] \rho
(U\otimes \id)^\dagger\|_1+
\frac{1}{2}
\sup_\rho
\| (U\otimes \id)\rho[
(U-V)\otimes \id]^\dagger\|_1.
\nonumber
\end{eqnarray}
We have added and subtracted a term and have used the triangle inequality.
Therefore,
\begin{eqnarray}
\frac{1}{2}
\|{\cal U}- {\cal V}\|_\diamond
\leq \frac{1}{2}
\|U-V\|_\infty 
\left(\sup_\rho
\| \rho (U\otimes \id)^\dagger\|_1
+ \sup_\rho
\| (V\otimes \id) \rho\|_1
\right)
\leq
\|U-V\|_\infty, 
\end{eqnarray}
as the operator norm is a weakly unitarily invariant norm. Therefore,
$\tilde {\cal C}(U,\delta) 
\geq r + 1$
implies that 
${\cal C}(U,\delta) \geq r$. That is, the strong unitary complexity of Ref.\ \cite{brandao2019models} is tighter than approximate circuit complexity. A topic of future work will be the exploration of the growth of approximate notions of complexity with an approximation error independent of the system size.

\end{document}